\documentclass[journal,final,twocolumn,10pt,twoside]{IEEEtranTCOM}
\normalsize

\hyphenation{op-tical net-works semi-conduc-tor}
\usepackage{mathtools}
\usepackage{amsmath}
\usepackage{amssymb}
\usepackage{amsmath}
\usepackage{cite}
\usepackage{algorithm}
\usepackage{enumerate,multirow}
\usepackage{algpseudocode}

\usepackage{siunitx}
\usepackage{multirow}
\usepackage{multicol}
\usepackage{color}
\usepackage{xcolor}

\usepackage{comment}

\usepackage{graphicx}
\usepackage{graphicx}
\date{}

\usepackage{calc}
\newcolumntype{M}[1]{>{\centering\arraybackslash}m{#1}}
\newcolumntype{N}{@{}m{0pt}@{}}

\usepackage{mdwmath}
\usepackage{blindtext}
\usepackage{eqparbox}
\usepackage{fixltx2e}
\usepackage{stfloats}

\usepackage{amsthm}

\newtheorem{theorem}{{Theorem}}

\newtheorem{proposition}[theorem]{{Proposition}}

\newtheorem{definition}{{Definition}}

\DeclareMathAlphabet{\mathbfsl}{OT1}{ppl}{b}{it} 



\def\QEDclosed{\mbox{\rule[0pt]{1.3ex}{1.3ex}}} 

\def\QED{\QEDclosed} 

\def\endproof{\hspace*{\fill}~\QED\par\endtrivlist\unskip}

\newcommand{\be}[1]{\begin{equation}\label{#1}}
	\newcommand{\ee}{\end{equation}}


\renewcommand{\le}{\leqslant} 
\renewcommand{\leq}{\leqslant}




\newcommand{\Pref}[1]{Pro\-po\-si\-tion\,\ref{#1}}

\newcommand{\Cref}[1]{Co\-ro\-lla\-ry\,\ref{#1}}





\begin{document}
	\title{Machine Learning-Aided Efficient  Decoding of Reed-Muller Subcodes}
	\author{Mohammad Vahid Jamali, Xiyang Liu, Ashok Vardhan Makkuva, \\Hessam Mahdavifar, Sewoong Oh, and Pramod Viswanath
		\thanks{This paper was presented in part at the IEEE International Symposium on Information Theory (ISIT), Melbourne, Victoria, Australia, July 2021 \cite{jamali2021Reed}.}
		\thanks{Mohammad Vahid Jamali and Hessam Mahdavifar are with the Electrical Engineering and Computer Science Department at the University of Michigan, Ann Arbor, MI 48109, USA (e-mail: mvjamali@umich.edu, hessam@umich.edu). Xiyang Liu and Sewoong Oh are with the Paul G. Allen School of Computer Science \& Engineering at the University of Washington, Seattle, WA 98195, SUA (e-mail: xiyangl@cs.washington.edu, sewoong@cs.washington.edu). Ashok Vardhan Makkuva is with the School of Computer and Communication Sciences at EPFL, Lausanne, Switzerland (email: ashok.makkuva@epfl.ch). And 
		Pramod Viswanath is with the Department of Electrical and Computer Engineering at Princeton University,  Princeton, NJ 08540, USA (email: pramodv@princeton.edu).
		}
	}
\maketitle

\begin{abstract}
Reed-Muller (RM) codes achieve the capacity of general binary-input memoryless symmetric channels and { are conjectured to have a comparable performance to that of random codes in terms of scaling laws. However, such results are established assuming maximum-likelihood decoders for general code parameters. Also, RM codes only admit limited sets of rates. Efficient decoders such as successive cancellation list (SCL) decoder and recently-introduced recursive projection-aggregation (RPA) decoders are available for RM codes at finite lengths. In this paper, we focus on subcodes of RM codes with flexible rates. We first extend the RPA decoding algorithm to RM subcodes.}
To lower the complexity of our decoding algorithm, referred to as subRPA, we investigate different approaches to prune the projections.
Next, we derive the soft-decision based version of our algorithm, called soft-subRPA, that not only improves upon the performance of subRPA but also enables a differentiable decoding algorithm. Building upon the soft-subRPA algorithm, we then provide a framework for training a machine learning (ML) model to search for \textit{good} sets of projections that minimize the decoding error rate.
Training our ML model enables achieving very close to the performance of full-projection decoding with a significantly smaller number of projections. We also show that the choice of the projections in decoding  RM subcodes matters significantly, and our ML-aided projection pruning scheme is able to find a \textit{good} selection, i.e., with negligible performance degradation compared to the full-projection case, 
given a reasonable number of projections.
\end{abstract}
\begin{IEEEkeywords} 
Reed-Muller (RM) codes, Machine learning, low-complexity decoding, recursive projection-aggregation (RPA) decoding, projection pruning.
\end{IEEEkeywords}
\section{Introduction}\label{Sec1}
\IEEEPARstart{R}EED-MULLER (RM) codes are among the first families of error-correcting codes, invented almost seven decades ago \cite{reed1954class,muller1954application}. They have received significant renewed interest after the breakthrough invention of polar codes \cite{arikan2009channel}, given the close connection between the two classes of codes. The generator matrices for both RM and polar codes can be obtained from the same square matrices -- the Kronecker powers of a $2 \times 2$ matrix -- though by different rules for selecting the rows. In fact, such a selection of rows for polar codes is channel-dependent but the RM encoder picks the rows with the largest Hamming weights, resulting in a universal construction.
RM codes are also conjectured to have characteristics similar to those of random codes in terms of both weight enumeration \cite{kaufman2012weight} and scaling laws \cite{hassani2018almost}. Moreover, 
 Reeves and Pfister have recently shown that RM codes achieve the capacity of general binary-input memoryless symmetric (BMS) channels \cite{reeves2021reed} under the bit maximum-a-posteriori (bit-MAP) decoding. This solves a long-standing open problem in coding theory while leaving the problem of finding efficient decoders for RM codes to provably achieve (or perform close to) such an excellent performance open. 

 Among the earlier results on decoding RM codes \cite{reed1954class,dumer2004recursive,dumer2006soft2,dumer2006soft,sakkour2005decoding,saptharishi2017efficiently,santi2018decoding}, Dumer's recursive list decoding algorithm \cite{dumer2004recursive,dumer2006soft2,dumer2006soft} provides a trade-off between the decoding complexity and the error probability. In other words, it is capable of achieving close to the maximum likelihood decoding performance for large enough, e.g., exponential in blocklength, list sizes. Recently, Ye and Abbe \cite{ye2020recursive} proposed a recursive projection-aggregation (RPA) algorithm for decoding RM codes. The RPA algorithm first projects the received corrupted codeword onto its cosets. It then recursively decodes the projected codes to, finally, construct the decoded codeword by properly aggregating the intermediate decoding results. 
Building upon the projection pruning idea in \cite{ye2020recursive}, a method for reducing the complexity of the RPA algorithm has also been explored in \cite{fathollahi2020sparse}. Moreover, a framework for encoding and decoding RM codes based on the product of smaller RM code components has been explored in \cite{jamali2021lowcomplexity}, with potential applications to low-capacity channels \cite{fereydounian2019channel}. Furthermore, building upon the computational tree of RM (and polar) codes, a class of neural encoders and decoders has been proposed in \cite{makkuva2021ko} via deep learning methods.

Besides lacking an efficient decoder in general, the structure of RM codes does not allow choosing a flexible rate. To clarify this, let $k$ and $n$ denote the code dimension and blocklength, respectively. Due to the underlying Kronecker product structure of RM codes, the code blocklength is a power of two, i.e., $n=2^m$, where $m$ is a design parameter. Additionally, RM codes posses another parameter $r$, that stands for the \textit{order} of the code, where $0\leq r\leq m$. Given the code blocklength $n$, one can then only construct RM codes with $m+1$ possible values for the code rate, each corresponding to a given code order $r$.

This research is inspired by the aforementioned two critical issues of RM codes. More specifically, we target subcodes of RM codes (with flexible rates that can take any code dimension from $1$ to $n$),
and our primary goal is to design low-complexity decoders for the RM subcodes. To this end, we first extend the RPA algorithm to what we call ``subRPA'' in this paper. Similar to the RPA algorithm, subRPA starts by projecting the received corrupted codeword onto the cosets. However, since the projected codes are no longer RM codes of lower orders, their corresponding generator matrices have different ranks (i.e., different code dimensions). SubRPA applies the MAP decoder at the bottom layer, which is feasible and efficient given the low dimension of the projected codes at that layer. It then aggregates the results back to recursively decode the received codeword. 

A major focus of this work is on reducing the complexity of our proposed decoding algorithms by pruning many of redundant projections. 
Through exploring different projection pruning strategies, we empirically show that the choice of projections can significantly impact the decoding performance of RM subcodes. 
We first propose a method, referred to as the \textit{minRank} projection pruning scheme (incurring the lowest decoding complexity, given a number of projections), that is observed to deliver a very good performance in a variety of scenarios. However, our results show that there are cases where even a random pruning scheme may outperform the minRank selection, especially when the number of projections used for the decoding are significantly smaller than the full number of projections.
Motivated by these observations, we leverage the recent advances in channel coding via machine/deep learning \cite{o2017introduction,gruber2017deep,jiang2019turbo,kim2018deepcode,kim2020physical,makkuva2021ko,akyildiz2022ml,jamali2021productae} to pick the optimal sets of projections via training a machine learning  (ML) model.
To this end,
we first derive the soft-decision based version of the subRPA algorithm, called ``soft-subRPA'', that not only improves upon the performance of the subRPA algorithm but also provides a differentiable version of our decoding algorithm.
 Enabled by our differentiable soft-subRPA algorithm, we train an ML model to search for the \textit{good} sets of projections. 
 We find out that carefully training our ML model provides the possibility to find the best sets of projections that achieve very close to the performance of full-projection decoding with much smaller number of projections. 
 
 We would like to highlight that our work also adds to the rich literature on soft-decision decoding of algebraic codes, including the celebrated work by Koetter and Vardy on soft-decision decoding of Reed-Solomon codes \cite{koetter2003algebraic}, which is also used for soft-decision decoding of other algebraic codes such as Hermitian codes \cite{lee2010algebraic} and elliptic codes \cite{wan2022algebraic}, as well as the work by Vardy and Be'ery on soft-decision decoding of Bose–Chaudhuri–Hocquenghem (BCH) codes \cite{vardy1994maximum}, among others.

 Finally, besides designing efficient decoding algorithms, we also provide some insights on encoding RM subcodes by empirically investigating their performance.
Our results show that constructing the code generator matrix with respect to a lower complexity for our algorithms results in a superior performance compared to a higher complexity generator matrix. Also, our empirical results for pruning projections mostly suggest a superior performance for the projection sets incurring a lower decoding complexity. This together with our observation on the encoding part unravels a two-fold gain for our proposed algorithms: {a better performance for a lower complexity}.

The rest of the paper is organized as follows. In Section \ref{Prelim}, we provide some preliminaries on RM codes and RPA decoding. In Section \ref{eff_dec}, we present the subRPA and soft-subRPA algorithms for decoding RM subcodes. We empirically investigate encoding of RM subcodes and present several ad-hoc projection pruning schemes in Section \ref{sec_insights}. 
Section \ref{sec_nn} is devoted to our ML-aided projection pruning algorithm, 
and Section \ref{conc} concludes the paper.

\section{Preliminaries}\label{Prelim}
In this section, we briefly review RM codes
 and the RPA algorithm. The reader is referred to \cite{ye2020recursive} for additional details on the RPA algorithm.
 
\subsection{RM Codes}\label{RM_rev}
Let $k$ and $n$ denote the code dimension and blocklength, respectively. Also, let $m=\log_2 n$. The $r$-th order RM code of length $2^m$, denoted by $\mathcal{RM}(m,r)$, is then defined by the following set of vectors as the basis
\begin{align}\label{rm_basis}
\{\boldsymbol{v}_m(\mathcal{A}):~\mathcal{A}\subseteq[m],|\mathcal{A}|\le r\},
\end{align}
where $[m]:=\{1,2,\dots,m\}$, $|\mathcal{A}|$ denotes the size of the set $\mathcal{A}$, and $\boldsymbol{v}_m(\mathcal{A})$ is a row vector of length $2^m$ whose components are indexed by binary vectors $\boldsymbol{z}=(z_1,z_2,\dots,z_m) \in \{0,1\}^m$ 
 as 
\begin{align}\label{v_mA}
\boldsymbol{v}_m(\mathcal{A},\boldsymbol{z}) = \prod_{i\in \mathcal{A}} z_i,
\end{align}
with the convention of $\prod_{i\in \mathcal{\emptyset}} z_i:=1$.
It follows
  from \eqref{rm_basis} that $\mathcal{RM}(m,r)$ has the dimension of 
\begin{align}\label{k}
k=\sum_{i=0}^r \binom{m}{i}.
\end{align}

Given the basis in
\eqref{rm_basis}, the (codebook of) $\mathcal{RM}(m,r)$ code is defined as the following set of binary vectors
\begin{align}\label{rm_codebook}
\mathcal{RM}(m,r):= \left\{\sum_{\mathcal{A}\subseteq[m],|\mathcal{A}|\le r}\hspace{-0.5cm}u(\mathcal{A}) \boldsymbol{v}_m(\mathcal{A}): u(\mathcal{A})\in\{0,1\}~
\forall \mathcal{A}\right\}.
\end{align}
Therefore, considering a polynomial ring $\mathbb{F}_2[Z_1,Z_2,\dots,Z_m]$ of $m$ variables, the components of $\boldsymbol{v}_m(A)$ are the evaluations of the monomial $\prod_{i\in \mathcal{A}}Z_i$ at points $\boldsymbol{z}$ in the vector space $\mathbb{E}:=\mathbb{F}_2^m$. Moreover, each codeword $\boldsymbol{c}=(\boldsymbol{c}(\boldsymbol{z}), \boldsymbol{z}\in\mathbb{E})\in\mathcal{RM}(m,r)$, that is also indexed by the binary vectors $\boldsymbol{z}$, is defined as the evaluations of an $m$-variate polynomial with degree at most $r$ at points $\boldsymbol{z}\in\mathbb{E}$.

\subsection{RPA Decoding Algorithm}\label{RPA_rev}
The RPA algorithm is comprised of the following three
main phases.
\subsubsection{Projection} The RPA algorithm starts by projecting the received corrupted binary vector (in the case of BSC) or the log-likelihood ratio (LLR) vector of the channel output (in the case of general binary-input memoryless channels) onto the subspaces of $\mathbb{E}$. Considering $\mathbb{B}$ as a $s$-dimensional subspace of $\mathbb{E}$, with $s\leq r$, the quotient space $\mathbb{E}/\mathbb{B}$ contains all the cosets of $\mathbb{B}$ in $\mathbb{E}$. Each coset $\mathcal{T}\in \mathbb{E}/\mathbb{B}$ has the form $\mathcal{T}=\boldsymbol{z}+\mathbb{B}$ for some $\boldsymbol{z}\in\mathbb{E}$. Then, in the case of BSC, the projection of the channel binary output $\boldsymbol{y}=(\boldsymbol{y}(\boldsymbol{z}), \boldsymbol{z}\in\mathbb{E})$ onto the cosets of $\mathbb{B}$ is defined as
\begin{align}  \label{proj_bsc}
\boldsymbol{y}_{/\mathbb{B}} :=\big(\boldsymbol{ y}_{/\mathbb{B}}(\mathcal{T}), \mathcal{T}\in\mathbb{E}/\mathbb{B} \big), \text{~s.t.~~} \boldsymbol{ y}_{/\mathbb{B}}(\mathcal{T}) := \bigoplus_{\boldsymbol{z}\in \mathcal{T}} \boldsymbol{ y}(\boldsymbol{z}),
\end{align}
where $\bigoplus$ denotes the coordinate-wise addition in $\mathbb{F}_2$. For the binary-input memoryless channels the RPA algorithm works on the projection of the channel output LLR vector $\boldsymbol{l}$. In the case of a one-dimensional subspace $\mathbb{B}$, the projected LLR vector can be obtained as $\boldsymbol{l}_{/\mathbb{B}} :=(\boldsymbol{l}_{/\mathbb{B}}(\mathcal{T}), \mathcal{T}\in\mathbb{E}/\mathbb{B})$, where
\begin{align}  \label{proj_awgn}
\boldsymbol{l}_{/\mathbb{B}}(\mathcal{T})\!=\!\ln\!\Big(\!\exp \big(\sum_{\boldsymbol{z}\in\mathcal{T}} \boldsymbol{l}(\boldsymbol{z})\big)\!+\!1\Big)\!-\!
\ln \Big( \sum_{\boldsymbol{z}\in \mathcal{T}} \exp(\boldsymbol{l}(\boldsymbol{z})) \Big).
\end{align}
In the case of a general $s$-dimensional subspace $\mathbb{B}$, the quotient space $\mathbb{E}/\mathbb{B}$ contains $2^{m-s}$ cosets $\mathcal{T}$ each of size $2^s$. Then, one can follow a similar approach to the proof of \cite[Eq. (13)]{ye2020recursive} to prove that $\boldsymbol{l}_{/\mathbb{B}}(\mathcal{T})$, for each coset $\mathcal{T}$, can be obtained recursively as
\begin{align}\label{rec_proj}
\boldsymbol{l}_{/\mathbb{B}}(\mathcal{T})=\ln\left(\frac{1+\exp \left(\boldsymbol{l}_{/\mathbb{B}}(\mathcal{T}_{1:2^{s-1}}) + \boldsymbol{l}_{/\mathbb{B}}(\mathcal{T}_{1+2^{s-1}:2^s})  \right)} 
{\exp\left(\boldsymbol{l}_{/\mathbb{B}}(\mathcal{T}_{1:2^{s-1}})\right) + \exp\left(\boldsymbol{l}_{/\mathbb{B}}(\mathcal{T}_{1+2^{s-1}:2^s})\right) }\right),
\end{align}
where the notation $\mathcal{T}_{i:j}$ is used to denote the subset of $\mathcal{T}$ containing all the elements from index $i$ to $j$. For the base case of the recursive equation \eqref{rec_proj} one can use $s=1$ to obtain \eqref{proj_awgn} as the base case. Alternatively, we can set $s=0$ as the base case with the convention of $\boldsymbol{l}_{/\mathbb{B}}(\mathcal{T}):=\boldsymbol{l}(\boldsymbol{z})$ for a set $\mathcal{T}$ containing a single element $\boldsymbol{z}$. In the latter case, we can derive \eqref{proj_awgn} as a special case of \eqref{rec_proj} by setting $s=1$.

\subsubsection{Decoding the Projected Outputs}
Once the decoder projects the channel output ($\boldsymbol{y}$ or $\boldsymbol{l}$), it starts recursively decoding the projected outputs, i.e., it projects them onto new subspaces and continues until the projected outputs correspond to order-$1$ RM codes. The decoder then applies the fast Hadamard transform (FHT) \cite{macwilliams1977theory} to efficiently decode order-$1$ codes. By using the FHT algorithm, one can implement the MAP decoder for the first-order RM codes with the complexity $\mathcal{O}(n\log n)$ instead of $\mathcal{O}(n^2)$. Once the first-order codes are decoded, the algorithm \textit{aggregates} the outputs (as explained next) to decode the codes at a higher layer. The decoder may also iterate the whole process, at each middle decoding step, several times to ensure the convergence of the algorithm.
\subsubsection{Aggregation} At each layer in the decoding process (and each node in the decoding tree), the decoder needs to \textit{aggregate} the output of the channel at that node with the decoding results of the next (underneath) layer to update the channel output. Note that the channel output at a given node can be either the actual channel output ($\boldsymbol{y}$ or $\boldsymbol{l}$) or the projected ones, depending on the depth of that node in the decoding tree of the recursive algorithm. Several aggregation algorithms are presented in \cite{ye2020recursive} for one- and two-dimensional subspaces. We refer the reader to \cite{ye2020recursive} for the details on the  aggregation methods.


\section{Efficient Decoding of RM Subcodes}\label{eff_dec}
\subsection{Problem Setting}\label{settings}
An equivalent description of the RM encoder can be obtained through the so-called polarization matrix. Indeed, the generator matrix of an $\mathcal{RM}(m,r)$ code, denoted by $\boldsymbol{G}_{k\times n}$, can be obtained by choosing rows of the following matrix that have a Hamming weight of at least $2^{m-r}$:
\begin{align}\label{Pnn}
\boldsymbol{P}_{n\times n}=\begin{bmatrix}
1 & 0\\ 1&1
\end{bmatrix}^{\otimes m},
\end{align}
where $\boldsymbol{F}^{\otimes m}$ is the $m$-th Kronecker power of a matrix $\boldsymbol{F}$. The resulting generator matrix $\boldsymbol{G}_{k\times n}$ can then be partitioned into sub-matrices as
\begin{align}\label{Gkn}
\boldsymbol{G}_{k\times n}=\begin{bmatrix}
\boldsymbol{G}_{0}\\
\boldsymbol{G}_{1}\vspace{-0.1cm}\\
\vdots\\
\boldsymbol{G}_{r-1}\\
\boldsymbol{G}_{r}
\end{bmatrix},
\end{align}
where $\boldsymbol{G}_{0}$ is a length-$n$ all-one row vector, and $\boldsymbol{G}_{1}$ is an $m\times n$ matrix that lists all the $n=2^m$ unique length-$m$ binary vectors $\{0,1\}^m$ as the columns. Moreover, $\boldsymbol{G}_{i}$, for $1\leq i\leq r$, is an $\binom{m}{i}\times n$ matrix whose each row is obtained by the element-wise product of a distinct selection of $i$ rows from $\boldsymbol{G}_{1}$ \cite{salomon2005augmented}. Accordingly, $\boldsymbol{G}_{k\times n}$ has exactly $\binom{m}{i}$ rows with the Hamming weight $n/2^i$, for $0\leq i\leq r$.


As seen, the RM encoder does not allow choosing any desired code dimension; it should be of the form $k=\sum_{i=0}^{r}\binom{m}{i}$ for some $r\in\{0,1,\cdots,m\}$. Suppose that we want to construct a subcode of $\mathcal{RM}(m,r)$ with a dimension $k$ such that $k_l< k< k_u$, where $k_l :=\sum_{i=0}^{r-1}\binom{m}{i}$ and $k_u :=\sum_{i=0}^{r}\binom{m}{i}$ for some $r\in[m]$. Given that the construction of RM codes corresponds to picking rows of $\boldsymbol{P}_{n\times n}$ that have the highest Hamming weights, the first $k_l$ rows of the generator matrix $\boldsymbol{G}_{k\times n}$ will be the same as the generator matrix of the lower-order RM code, i.e.,  $\mathcal{RM}(m,r-1)$, that has a Hamming weight of at least $2^{m-r+1}$. It then remains to pick extra $k-k_l$ rows from $\boldsymbol{P}_{n\times n}$. These will be picked from the additional $k_u-k_l=\binom{m}{r}$ rows in 
$\boldsymbol{G}_{r}$ since they all have the same Hamming weight of $2^{m-r}$, which is the next largest Hamming weight. In a sense, we limit our attention to RM subcodes that, roughly speaking, \textit{sit} between two RM codes of consecutive orders. More specifically, they are subcodes of $\mathcal{RM}(m,r)$ and also contain $\mathcal{RM}(m,r-1)$ as a subcode, for some $r \in [m]$. The question is then how to choose the extra $k-k_l$ rows out of those $\binom{m}{r}$ rows of weight $2^{m-r}$ to construct an RM subcode of dimension $k$ as specified above.
This important question requires
 a separate follow-up work and is beyond the scope of this paper. In the meantime, we provide some insights regarding the encoding of RM subcodes in Section \ref{sec_encoding} after describing our decoding algorithms in Sections \ref{sec_subRPA} and \ref{sec_Soft-subRPA} with respect to a generic generator matrix $\boldsymbol{G}_{k\times n}$.
Our results show that randomly selecting a subset of those rows is not always good. Indeed, some selections are better that the others, and also the set of \textit{good} rows can depend on the underlying decoding algorithm.

\subsection{SubRPA Decoding Algorithm}\label{sec_subRPA}
Before delving into the description of our decoding algorithms, we first need to emphasize some important facts.

\noindent{\textbf{Remark 1.}} The result of the projection operation corresponds to a code with the generator matrix that is formed by merging (i.e., binary addition of) the columns of the original code generator matrix indexed by the cosets of the projection subspace. This is clear for the BSC model, as formulated in \eqref{proj_bsc}. Additionally, for general BMS channels, the objective is to estimate the projected codewords $\boldsymbol{ c}_{/\mathbb{B}}(\mathcal{T})$'s, $\mathcal{T}\in\mathbb{E}/\mathbb{B}$, based on the channel (projected) LLRs \cite{ye2020recursive}; hence, the same principle follows for any BMS channels.


\begin{proposition}\label{prop_subcode}
	Let $\mathcal{C}$ be a subcode of $\mathcal{RM}(m,r)$ with dimension $k$ such that $k_l< k< k_u$, where $k_l :=\sum_{i=0}^{r-1}\binom{m}{i}$ and $k_u :=\sum_{i=0}^{r}\binom{m}{i}$ for some $r\in[m]$. The projection of this code onto $s$-dimensional subspaces of $\mathbb{E}$, $1\leq s\leq r-1$, results in subcodes of $\mathcal{RM}(m-s,r-s)$. It is also possible for the projected codes to be $\mathcal{RM}(m-s,r-s)$ or $\mathcal{RM}(m-s,r-1-s)$ codes.
\end{proposition}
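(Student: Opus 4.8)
The plan is to work entirely at the level of generator matrices, using Remark~1: projecting onto an $s$-dimensional subspace $\mathbb{B}$ produces a code whose generator matrix is obtained from $\boldsymbol{G}_{k\times n}$ by summing, within each coset $\mathcal{T}\in\mathbb{E}/\mathbb{B}$, the columns indexed by the elements of $\mathcal{T}$. So I need to understand what this column-merging operation does to each of the row blocks $\boldsymbol{G}_0,\boldsymbol{G}_1,\dots,\boldsymbol{G}_r$ in \eqref{Gkn}. First I would recall the known fact (from \cite{ye2020recursive}, or directly from the evaluation description in Section~\ref{RM_rev}) that projecting the \emph{full} code $\mathcal{RM}(m,r)$ onto an $s$-dimensional subspace yields exactly $\mathcal{RM}(m-s,r-s)$: the quotient space $\mathbb{E}/\mathbb{B}\cong\mathbb{F}_2^{m-s}$ plays the role of the new evaluation space, and a degree-$\le r$ polynomial, after summing evaluations over cosets of $\mathbb{B}$, becomes a degree-$\le r-s$ polynomial in the $m-s$ quotient variables (each of the $s$ directions of $\mathbb{B}$ "costs" one degree). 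This immediately gives two sandwiching facts: the projection of $\mathcal{C}$ is contained in the projection of $\mathcal{RM}(m,r)$, which is $\mathcal{RM}(m-s,r-s)$; and the projection of $\mathcal{C}$ contains the projection of $\mathcal{RM}(m,r-1)$, which is $\mathcal{RM}(m-s,r-1-s)$ (here I use $s\le r-1$ so that $r-1-s\ge 0$ and this inner code is nontrivial, and that $\mathcal{RM}(m,r-1)\subseteq\mathcal{C}$ by the problem setting in Section~\ref{settings}).

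Next I would assemble these two containments into the claim. We have shown
\[
\mathcal{RM}(m-s,r-1-s)\ \subseteq\ \mathcal{C}_{/\mathbb{B}}\ \subseteq\ \mathcal{RM}(m-s,r-s),
\]
which is precisely the statement that $\mathcal{C}_{/\mathbb{B}}$ is a subcode of $\mathcal{RM}(m-s,r-s)$ sitting between two consecutive-order RM codes — the same "sandwiched" structure as $\mathcal{C}$ itself, but with parameters $(m-s,r-s)$. For the second sentence of the proposition, I would note the two extreme cases: if the extra $k-k_l$ rows chosen from $\boldsymbol{G}_r$ happen, after column-merging, to span the whole degree-$(r-s)$ increment in the quotient, then $\mathcal{C}_{/\mathbb{B}}=\mathcal{RM}(m-s,r-s)$; if instead all those merged rows fall inside $\mathcal{RM}(m-s,r-1-s)$ (e.g.\ they collapse to lower-degree or zero vectors under the merge), then $\mathcal{C}_{/\mathbb{B}}=\mathcal{RM}(m-s,r-1-s)$. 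Exhibiting one concrete instance of each — for example, a single well-chosen degree-$r$ monomial row that survives as a degree-$(r-s)$ monomial, versus a degree-$r$ monomial all of whose variables lie in directions of $\mathbb{B}$, which sums to a lower-degree term — suffices to justify "it is also possible."

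The main obstacle I anticipate is handling the column-merging map cleanly for a general $s$-dimensional subspace $\mathbb{B}$ rather than just $s=1$, and in particular verifying carefully that a degree-$r$ monomial evaluation vector, summed over the cosets of $\mathbb{B}$, really does land in degree $\le r-s$ (and can land strictly lower). The cleanest route is to change coordinates so that $\mathbb{B}$ is the span of the first $s$ standard basis vectors; then summing over a coset is literally summing a multilinear monomial over all $2^s$ settings of the first $s$ variables, and a short computation shows the result depends only on the remaining $m-s$ variables and has degree equal to (number of monomial variables outside $\mathbb{B}$) — which is $\le r-s$ exactly when at least... well, it is automatically $\le r-s$ when the monomial has $\le r$ variables total unless fewer than $s$ of them lie in $\mathbb{B}$, in which case the sum vanishes; I would phrase this so the "$\subseteq\mathcal{RM}(m-s,r-s)$" direction is uniform and the vanishing case feeds the "it is also possible" remark. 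The containment $\mathcal{RM}(m,r-1)\subseteq\mathcal{C}$ from Section~\ref{settings} and the standard projection identity for full RM codes do most of the work, so beyond this coordinate bookkeeping the argument is short.
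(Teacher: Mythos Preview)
Your proposal is correct and follows essentially the same route as the paper: sandwich $\mathcal{C}$ between $\mathcal{RM}(m,r-1)$ and $\mathcal{RM}(m,r)$, invoke the known projection identity for full RM codes (which the paper simply cites from \cite{ye2020recursive}), and read off $\mathcal{RM}(m-s,r-1-s)\subseteq\mathcal{C}_{/\mathbb{B}}\subseteq\mathcal{RM}(m-s,r-s)$, with the two extreme cases phrased exactly as you do in terms of whether the extra $k-k_l$ rows do or do not contribute to the rank after column merging. Your coordinate-change sketch of the monomial computation is more detailed than what the paper gives, but it is standard and not needed beyond the citation.
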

\begin{proof}
	Please refer to Appendix \ref{app_prop1}.
\end{proof}

Hereafter, for the sake of brevity, we simply say that the projections of a subcode of $\mathcal{RM}(m,r)$ code onto the $s$-dimensional subspaces of $\mathbb{E}$ are subcodes of $\mathcal{RM}(m-s,r-s)$; however, we still mean the precise statement in \Pref{prop_subcode}. Now, we are ready to present our decoding algorithms for RM subcodes. Our algorithms are based on projecting onto one-dimensional (1-D) subspaces. However, they can be generalized to the case of $s$-dimensional subspaces by following a similar approach.


As schematically shown in Fig. \ref{fig0_1}, the subRPA algorithm proceeds in a similar way to the RPA algorithm. More precisely, it first projects the code $\mathcal{C}$, that is a subcode of $\mathcal{RM}(m,r)$, onto 1-D subspaces to get subcodes of $\mathcal{RM}(m-1,r-1)$ at the next layer. It then recursively applies the subRPA algorithm to decode these projected codes. Next, it aggregates the decoding results of the  next layer with the output LLRs of the current layer (similar to \cite[Algorithm 4]{ye2020recursive}) to update the LLRs. Finally, it iterates this process several times to ensure the convergence of the algorithm, and takes the sign of the updated LLRs to obtain the decoded codewords.

The main distinction between the subRPA and RPA algorithms, however, is the decoding of the projected codes at the bottom layer. Based on \Pref{prop_subcode}, after $r-1$ layers of 1-D projections, the decoder ends up with subcodes of $\mathcal{RM}(m-r+1,1)$ at the bottom layer. These projected codes can have different dimensions though all are less than or equal to $m-r+2$. Therefore, the subRPA algorithm, manageably, applies the MAP decoding at the bottom layer. 

Given that the projected codewords at the bottom layer are not all from the same code, the MAP decoding should be carefully performed. Based on Remark 1, the projected codes at the bottom layer can be obtained from the so-called \textit{projected generator matrices} of dimension $k\times 2^{m-r+1}$, after $r-1$ times (binary) merging of the $2^m$ columns of the original generator matrix $\boldsymbol{G}_{k\times n}$. However, many of these $k$ rows of the projected generator matrices are linearly dependent. In fact, all of these matrices have ranks (i.e., code dimensions) of less than or equal to $m-r+2$. In order to facilitate the MAP decoding at the bottom layer, we can pre-compute and store the codebook of each projected code at the bottom layer. Particularly, let $R_{t}$ be the rank of the $t$-th projected generator matrix $\boldsymbol{G}_p^{(t)}$ at the bottom layer, $t\in[T]$, where $T$ is the total number of projected codes at the bottom layer (which depends on the number of layers as well as the number of projections per layer). We can then pre-compute the codebook $\mathcal{C}_p^{(t)}$ that contains the $2^{R_{t}}$ length-($n/2^{r-1}$) codewords $\boldsymbol{c}^{(t)}_{p,i_t}$, $i_t\in[2^{R_{t}}]$, of the $t$-th projected code at the bottom layer. Now, given the projected LLR vector $\boldsymbol{l}^{(t)}_p$ of length $n/2^{r-1}$ at the bottom layer, we pick the codeword $\boldsymbol{c}^{(t)}_{p,i^*}$ that maximizes the MAP rule for BMS channels \cite{ye2020recursive}, i.e., 
\begin{align}\label{map}
\hat{\boldsymbol{y}}_{t}=\boldsymbol{c}^{(t)}_{p,i^*},~~ \text{s.t.} ~~~~ i^*=\operatorname*{argmax}_{i_t\in[2^{R_{t}}]}~~ \langle\boldsymbol{l}^{(t)}_p,1-2{\boldsymbol{c}^{(t)}_{p,i_t}}\rangle,
\end{align}
where $\langle \cdot,\cdot\rangle$ denotes the inner (dot) product of two vectors.
An efficient algorithm for computing $\mathcal{C}_p^{(t)}$ given $\boldsymbol{G}_p^{(t)}$ is presented in Algorithm \ref{U_Cp_Alg} in Section \ref{sec_Soft-subRPA}.

\begin{figure}[t]
	\centering
	\includegraphics[trim=0.1cm 0.2cm 0.1 0.1cm,width=3.4in]{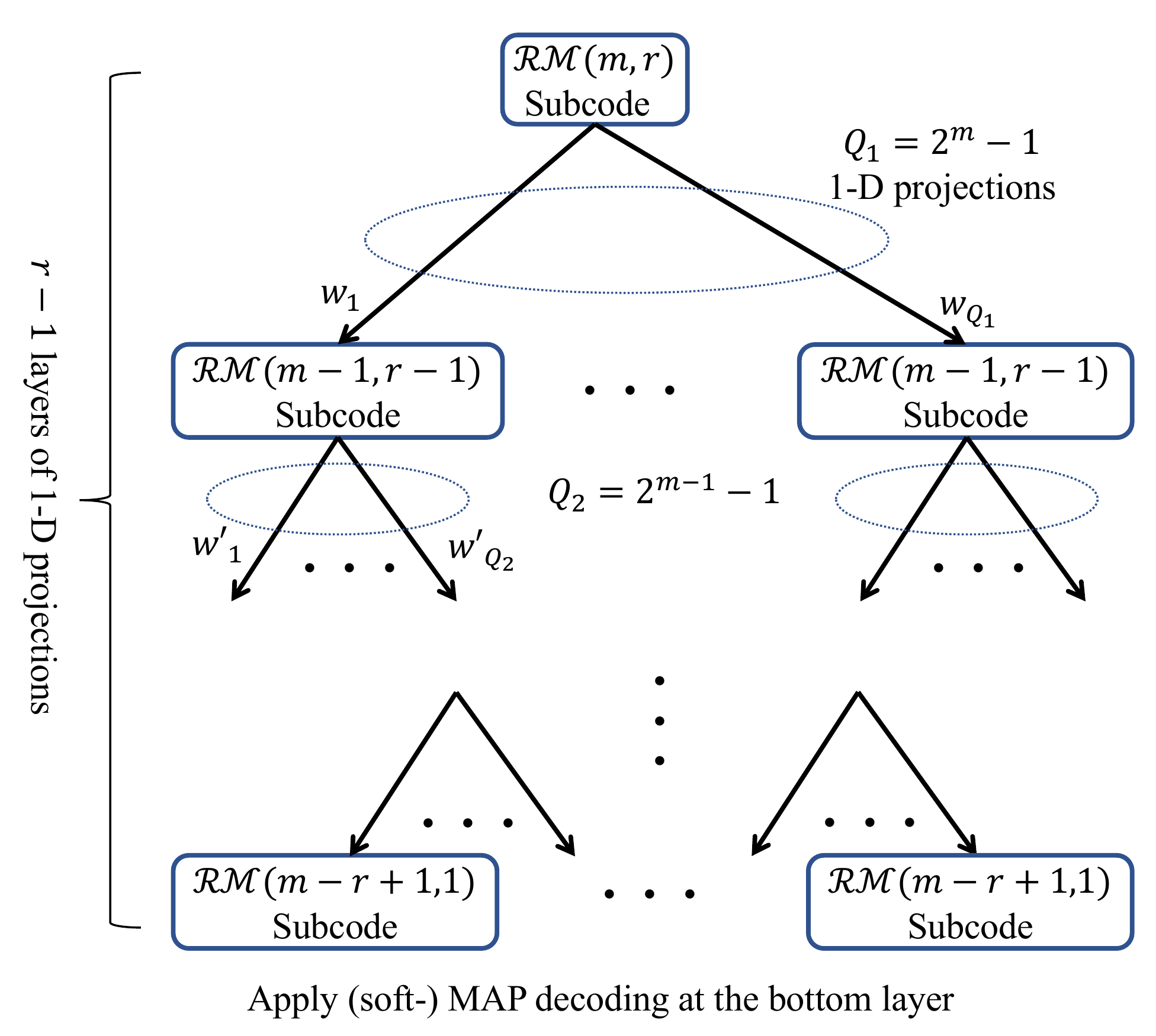}
	\caption{Schematic diagram of the subRPA and soft-subRPA algorithms.
	 }
	\label{fig0_1}
\end{figure}

\subsection{Soft-SubRPA Algorithm}\label{sec_Soft-subRPA}
In this section, we derive the soft-decision version of the subRPA algorithm, referred to as \textit{soft-subRPA} in this paper. As schematically shown in Fig. \ref{fig0_1}, the soft-subRPA algorithm obtains soft decisions at the bottom layer instead of performing hard MAP decodings; this process is called \textit{soft-MAP} in this paper. Additionally, the decoder applies a different rule to aggregate the soft decisions obtained from the next layers with the LLRs available at the current layer; we refer to this aggregation process as \textit{soft-aggregation}. The soft-subRPA algorithm not only improves upon the performance of the subRPA but also replaces the hard MAP decodings at the bottom layer with a differentiable operation that, in turn, enables training an ML model as delineated in Section \ref{sec_nn}.

The soft-MAP algorithm for making soft decisions on the projected codes at the bottom layer, that are subcodes of first-order RM codes, is presented in Algorithm \ref{alg_softmap} for the case of the additive white Gaussian noise (AWGN) channel. The process is comprised of two main steps : 1) obtaining the LLRs of the information bits, and 2) obtaining the soft decisions (i.e., LLRs) of the coded bits using that of information bits. Note that we invoke \textit{max-log} and \textit{min-sum} approximations, to be clarified later, in Algorithm \ref{alg_softmap}. For the sake of brevity, let us drop the superscript $t$. Particularly, let $R$ be the rank of the projected generator matrix $\boldsymbol{G}_p$ of a projected code at the bottom layer with codebook $\mathcal{C}_p$. Also, assume a $2^R\times k$ matrix $\boldsymbol{U}$ that lists all $2^R$ length-$k$ sequences of bits that generate the codebook $\mathcal{C}_p$ (through modulo-$2$ matrix multiplication $\boldsymbol{U}\boldsymbol{G}_p$). 

An efficient algorithm for computing matrix $\boldsymbol{U}$ and codebook $\mathcal{C}_p$ for a given projected generator matrix $\boldsymbol{G}_p$ is presented in Algorithm \ref{U_Cp_Alg}. In Algorithm \ref{U_Cp_Alg}, $\texttt{gfrank}(\boldsymbol{A},2)$ is a function that computes the rank of the matrix $\boldsymbol{A}$ over the binary field. Moreover, $\texttt{de2bi}(a:b,m)$ is a function that outputs a $(b-a+1)\times m$ matrix whose rows are the length-$m$ binary representations of all the integers from $a$ to $b$.
The algorithm first iterates over the rows of $\boldsymbol{G}_p$ to find the index of the (first) $R$ linearly independent rows, i.e., the index of the rows forming a basis for $\boldsymbol{G}_p$. The algorithm stops iterating over the remaining rows as soon as $R$ linearly independent rows are found (i.e., when $r=R$) to avoid unnecessary work. Once the set $\mathcal{U}_{\text{ind}}$ of those indices is found, the $2^R\times k$ matrix $\boldsymbol{U}$ is formed by inserting all distinct binary vectors of length $R$ in the $R$ columns of $\boldsymbol{U}$ indexed by the set $\mathcal{U}_{\text{ind}}$, and freezing the remaining $k-R$ columns to zero. Finally, the codebook $\mathcal{C}_p$ is obtained by the matrix multiplication of $\boldsymbol{U}\boldsymbol{G}_p$ over $\mathbb{F}_2$. { The memory required to store the projected generator matrices and codebooks at the bottom layer is quantified in Appendix \ref{app_memory}.}

Given that only $R$ indices of the length-$k$ sequences in $\boldsymbol{U}$ contain the information bits (and the remaining bit positions are frozen to $0$), the objective of the first step of the soft-MAP algorithm is to obtain the LLRs of the $R$ information bits using the available projected LLR vector $\boldsymbol{l}_p$. This can be done, using \eqref{llr_inf} in Appendix \ref{appnd_LLR_inf} invoking max-log approximation, as described in Algorithm \ref{alg_softmap}. Note that the LLRs of the $k-R$ indices that do not carry information are set to zero. 

\begin{algorithm}[t]
	\caption{Soft-MAP Algorithm for the AWGN Channel} \label{alg_softmap}
	\textbf{Input:} The LLR vector $\boldsymbol{l}_p$; the generator matrix $\boldsymbol{G}_p$; the codebook $\mathcal{C}_p$; and the matrix $\boldsymbol{U}$ of the information sequences
	
	\textbf{Output:} Soft decisions (i.e., the updated LLR vector) $\hat{\boldsymbol{l}}$
	\vspace*{0.05in}
	\begin{algorithmic}[1]
		\State $k\gets$ number of rows in $\boldsymbol{G}_p$
		\State  $\boldsymbol{l}_{\rm inf}\gets \mathbf{0}_k$ \Comment initialize $\boldsymbol{l}_{\rm inf}$ as a length-$k$ all-zero vector
		\State $\boldsymbol{\tilde{C}}\gets 1-2{\boldsymbol{C}}$ \Comment $\boldsymbol{C}$ is the codebook matrix (in binary)
		\State $\boldsymbol{\tilde{l}}\gets \boldsymbol{l}_p\boldsymbol{\tilde{C}}^{T}$ \Comment matrix mul. of $\boldsymbol{l}_p$ with the transpose of $\boldsymbol{\tilde{C}}$
		\For {$i=1,2,\cdots,k$} \Comment obtaining inf. bits LLRs
		\If {$\boldsymbol{U}(:,i)\neq\mathbf{0}$ ($i$-th column is not frozen to $0$)}
		\State $\displaystyle\boldsymbol{l}_{\rm inf}(i) \gets \operatorname*{max}_{i'\in\{i':\boldsymbol{U}(i'\!,i)=0\}}\!\boldsymbol{\tilde{l}}(i')~-\hspace{-0.1cm}\operatorname*{max}_{i'\in\{i':\boldsymbol{U}(i'\!,i)=1\}}\!\boldsymbol{\tilde{l}}(i')$
		\EndIf
		\EndFor 
		\State $n'\gets$ number of columns in $\boldsymbol{G}_p$
		\State  $\boldsymbol{l}_{\rm enc}\gets \mathbf{0}_{n'}$ \Comment initialize $\boldsymbol{l}_{\rm enc}$ as a length-$n'$ all-zero vector
		
		\State  Initialize $\boldsymbol{l}_{\rm enc}$ as an all-zero vector of length $n'$
		\State $\boldsymbol{L}\gets \texttt{repeat}(\boldsymbol{l}_{\rm inf}^T,1,n')$ \Comment  make $n'$ copies of $\boldsymbol{l}_{\rm inf}^T$
		\State $\boldsymbol{V}\gets \boldsymbol{L}\odot \boldsymbol{G}_p$ \Comment element-wise matrix multiplication
		\For {$j=1,2,\cdots,n'$}
		\State $\boldsymbol{v}\gets$ vector containing nonzero elements of $\boldsymbol{V}(:,j)$
		\State $\boldsymbol{l}_{\rm enc}(j)\gets\prod_{j'}{\rm sign}(\boldsymbol{v}(j')) \times \operatorname*{min}_{j'} |\boldsymbol{v}(j')|$
		\EndFor 
		\State $\hat{\boldsymbol{l}}\gets\boldsymbol{l}_{\rm enc}$
		\State \textbf{return} $\hat{\boldsymbol{l}}$
	\end{algorithmic}
\end{algorithm}

\begin{algorithm}[t]
	\caption{Matrix $\boldsymbol{U}$ and codebook $\mathcal{C}_p$ Finder}    \label{U_Cp_Alg}
	\textbf{Input:} The  projected generator matrix $\boldsymbol{G}_p$
	
	\textbf{Output:} Matrix $\boldsymbol{U}$ of the information sequences; and codebook $\mathcal{C}_p$ of the projected code
	\vspace*{0.05in}
	\begin{algorithmic}[1]
		
		\State $k\gets$ number of rows in $\boldsymbol{G}_p$
		\State $\mathcal{U}_{\text{ind}}\gets\{\}$ \Comment initialize $\mathcal{U}_{\text{ind}}$ as an empty set
		\State $r\gets 0$
		\State $\boldsymbol{G}^{\text{tmp}}_p\gets [~]$ \Comment initialize $\boldsymbol{G}^{\text{tmp}}_p$ as an empty matrix
		\State $R\gets \texttt{gfrank}(\boldsymbol{G}_p,2)$
		\State $i\gets 1$
		\While {$i\leq k$ \textbf{and} $r<R$} \Comment iterate over the rows of $\boldsymbol{G}_p$
		\State Add the $i$-th row of $\boldsymbol{G}_p$ to $\boldsymbol{G}^{\text{tmp}}_p$
		\State $i\gets i+1$
		\If {$\texttt{gfrank}(\boldsymbol{G}^{\text{tmp}}_p,2)>r$}
		\State $r\gets r+1$
		\State Add $i$ to $\mathcal{U}_{\text{ind}}$
		\EndIf
		\EndWhile
		
		\State $\boldsymbol{U}\gets \mathbf{0}_{2^R\times k}$ \Comment initialize $\boldsymbol{U}$ as an all-zero $2^R\times k$ matrix
		\State $\boldsymbol{U}(:,\mathcal{U}_{\text{ind}})\gets \texttt{de2bi}(0:2^R-1,R)$ \Comment fill out
		the columns in $\boldsymbol{U}$ indexed by the set $\mathcal{U}_{\text{ind}}$ with the $2^R$ distinct  binary vectors of length $R$
		\State $\boldsymbol{C}\gets\boldsymbol{U}\boldsymbol{G}_p~\texttt{mod}~2$ \Comment matrix multiplication over $\mathbb{F}_2$
		\State $\mathcal{C}_p\gets$ rows of $\boldsymbol{C}$ \Comment list all rows of $\boldsymbol{C}$ in $\mathcal{C}_p$
		\State \textbf{return} $\boldsymbol{U}$ and $\mathcal{C}_p$
	\end{algorithmic}
\end{algorithm}

Once the LLRs of the information bits are calculated, they can be combined according to the columns of $\boldsymbol{G}_p$ to obtain the LLRs of the encoded bits $\boldsymbol{l}_{\rm enc}$. The codewords in $\mathcal{C}_p$ are obtained by the multiplication of $\boldsymbol{U}\boldsymbol{G}_p$, i.e., each $j$-th coded bit, $j\in[n']$, where $n'$ is the code length, is obtained based on the linear combination of the information bits $u_i$'s according to the $j$-th column of $\boldsymbol{G}_p$. Therefore, we can apply the well-known min-sum approximation to calculate the LLR vector of the coded bits as $\boldsymbol{l}_{\rm enc}:=(\boldsymbol{l}_{\rm enc}(j), j\in[n'])$, where
\begin{align}\label{llr_enc}
\boldsymbol{l}_{\rm enc}(j)=\prod_{i\in \Delta_{j}}{\rm sign}(\boldsymbol{l}_{\rm inf}(i)) \times \operatorname*{min}_{i\in \Delta_{j}} |\boldsymbol{l}_{\rm inf}(i)|,
\end{align}
where $\Delta_{j}$ is the set of indices defining the nonzero elements in the element-wise multiplication of $\boldsymbol{l}_{\rm inf}$ (to skip the frozen bit positions under the formulation of this paper) with the $j$-th column of $\boldsymbol{G}_p$. This process is summarized in Algorithm \ref{alg_softmap} in an efficient way. The decoder may also iterate the whole process several times to ensure the convergence of the soft-MAP algorithm.

Finally, given the soft decisions at the bottom layer, the decoder needs to aggregate the decisions with the current LLRs. 
In the following, we first define the ``soft-aggregation'' scheme as an extension of the aggregation method in \cite[Algorithm 4]{ye2020recursive} for the case of soft decisions.

\begin{definition}[Soft-Aggregation]\label{def_softaggr}
	Let $\boldsymbol{l}$ be the vector of the channel LLRs, with length $n=2^m$, at a given layer. Suppose that there are $Q$ 1-D subspaces $\mathbb{B}_q$, $q\in[Q]$, to project this LLR vector at the next layer (in the case of full-projection decoding, there are $n-1$ 1-D subspaces, hence $Q=n-1$). 
	Also, let $\boldsymbol{\hat{l}}_q$ denote the length-$n/2$ vector of soft decisions of the projected LLRs according to Algorithm \ref{alg_softmap}. The ``soft-aggregation'' of $\boldsymbol{l}$ and $\boldsymbol{\hat{l}}_q$'s is defined as a length-$n$ vector $\tilde{\boldsymbol{l}}:=(\tilde{\boldsymbol{l}}(\boldsymbol{z}), \boldsymbol{z}\in\mathbb{F}_2^m)$ where 
	\begin{align}\label{eq_softaggr}
	\tilde{\boldsymbol{l}}(\boldsymbol{z})=\frac{1}{Q}\sum_{q=1}^{Q}\tanh\big(\boldsymbol{\hat{l}}_q\left([\boldsymbol{z}+\mathbb{B}_q]\right)/2\big) \boldsymbol{l}(\boldsymbol{z}\oplus\boldsymbol{z}_q).
	\end{align}
	where $\boldsymbol{z}_q$ is the nonzero vector of the 1-D subspace $\mathbb{B}_q$, and $[\boldsymbol{z}+\mathbb{B}_q]$ is the coset containing $\boldsymbol{z}$ for the projection onto $\mathbb{B}_q$. 
\end{definition}
In order to observe \eqref{eq_softaggr}, recall that the objective of the aggregation step is to update the length-$n$ channel LLR vector $\boldsymbol{l}$ to $\tilde{\boldsymbol{l}}$ given the soft decisions of the projected codes.  $\boldsymbol{\hat{l}}_q\left([\boldsymbol{z}+\mathbb{B}_q]\right)$ {serves} as a soft estimate of the binary addition of the coded bits at positions $\boldsymbol{z}$ and $\boldsymbol{z}\oplus\boldsymbol{z}_q$. Hence, by following similar arguments to \cite{ye2020recursive}, if that combined bit is $0$, then the updated LLR at position $\boldsymbol{z}$ should take the same sign as the channel LLR at position $\boldsymbol{z}\oplus \boldsymbol{z}_q$. Note that this happens with probability $a_0:=1/\big[1+\exp\big(-\boldsymbol{\hat{l}}_q\left([\boldsymbol{z}+\mathbb{B}_q]\right)\big)\big]$. Similarly, with probability $a_1:=1/\big[1+\exp\big(\boldsymbol{\hat{l}}_q\left([\boldsymbol{z}+\mathbb{B}_q]\right)\big)\big]$ the combined bit is $1$, and hence the updated LLR at position $\boldsymbol{z}$ and $\boldsymbol{l}(\boldsymbol{z}\oplus\boldsymbol{z}_q)$ should have different signs. Therefore, given a projection subspace $\mathbb{B}_q$, one can update the channel LLR as $a_0\times\boldsymbol{l}(\boldsymbol{z}\oplus\boldsymbol{z}_q)+a_1\times-\boldsymbol{l}(\boldsymbol{z}\oplus\boldsymbol{z}_q)$. Taking the average over all $Q$ projections then results in the soft-aggregation rule in \eqref{eq_softaggr}.

It is worth mentioning that one can also 
 update the channel LLR as
\begin{align}\label{eq_logsum}
\tilde{\boldsymbol{l}}_{\rm ls}(\boldsymbol{z})=\frac{1}{Q}\sum_{q=1}^{Q}\ln\left(\frac{1+{\rm e}^{\boldsymbol{\hat{l}}_q\left([\boldsymbol{z}+\mathbb{B}_q]\right)+\boldsymbol{l}(\boldsymbol{z}\oplus\boldsymbol{z}_q)}}{{\rm e}^{\boldsymbol{\hat{l}}_q\left([\boldsymbol{z}+\mathbb{B}_q]\right)}+{\rm e}^{\boldsymbol{l}(\boldsymbol{z}\oplus\boldsymbol{z}_q)}}\right).
\end{align}
The rationale behind \eqref{eq_logsum} follows by similar arguments as above and then deriving the LLR of the sum of two binary random variables given the LLRs of each of them. Therefore, \eqref{eq_logsum} is an exact expression assuming  independence among the involved LLR components.   Our empirical observations, however, suggest almost identical results for either aggregation methods. Therefore, given the complexity of computing expressions like \eqref{eq_logsum}, one can reliably apply our proposed soft-aggregation method in Definition \ref{def_softaggr}.

{\noindent{\textbf{Remark 2.}} The subRPA and soft-subRPA decoding algorithms reduce to the original RPA decoding algorithm \cite{ye2020recursive} and its soft version, respectively, when applied to an RM code instead of an RM subcode (i.e., when the code dimension $k$, for a given $m$, follows Eq. \eqref{k}). The only difference is the decoding at the bottom layer, where the FHT decoding can then be directly applied given that all projected codes are order-1 RM codes. Therefore, the proposed ML training approach in Section \ref{sec_nn} can be readily applied to the RM codes as well. However, we will \textit{empirically} establish (see Fig. \ref{fig10}) that the performance of a pruned-projection decoding of an RM code is (almost) the same regardless of the selection of the projections. Therefore, not much (if any) gain can be expected from ML training for projection selection in RPA decoding of RM codes, and simply a random selection of the projections may be sufficient for RPA decoding of RM codes.}

{ Before concluding this section, in the following proposition, we characterize the complexity of our proposed decoding algorithms under different settings
\begin{proposition}\label{prop_comp}
	The decoding complexity of our proposed (soft-) subRPA algorithm in decoding a subcode of an $\mathcal{RM}(m,r)$ code, $r>1$,  is $\mathcal{O}(n^{r-1} \mathcal{C}(m-r+1,1))$, where $\mathcal{C}(m',1)$ stands for the complexity of decoding a subcode of an $\mathcal{RM}(m',1)$ code. Assuming (soft-) MAP at the bottom layer, $\mathcal{C}(m-r+1,1)=\mathcal{O}(n^2/2^{2r-3})$, and the overall decoding complexity simplifies to $\mathcal{O}(n^{r+1})$. The decoding complexity reduces to $\mathcal{O}(n^2)$ for pruned-projection decoding with factor $\beta=\mathcal{O}(1/n)$. The overall complexity further reduces to $\mathcal{O}(n)$ if $2^{R_t}=\mathcal{O}(1)$, $\forall t\in[T]$, in addition to $\beta=\mathcal{O}(1/n)$, where $R_t$ stands for the rank of the $t$-th projected generator matrix at the bottom layer.
\end{proposition}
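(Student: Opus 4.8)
The plan is to count operations layer by layer in the recursive decoding tree and then specialize. First I would set up the recursion: at the top layer we have one code of length $n=2^m$, and each 1-D projection produces a code of length $n/2$; after $r-1$ layers of full projections the tree has $\prod_{j=0}^{r-2}(n/2^j - 1) = \mathcal{O}(n^{r-1})$ leaves, each a length-$(n/2^{r-1})$ subcode of $\mathcal{RM}(m-r+1,1)$. Writing $\mathcal{C}(m',1)$ for the cost of decoding one such leaf (by MAP or soft-MAP), the dominant term in the per-iteration cost is the leaf work $\mathcal{O}(n^{r-1}\,\mathcal{C}(m-r+1,1))$; I would then check that the projection step at each internal node (cost $\mathcal{O}(n\log n)$ per projection via \eqref{rec_proj}) and the aggregation step (cost $\mathcal{O}(n)$ per projection via Definition \ref{def_softaggr}) contribute strictly lower-order terms when summed over the tree, and that the constant number of outer iterations does not change the order. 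This establishes the first claimed bound.

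Next I would bound $\mathcal{C}(m-r+1,1)$. By \eqref{map}, MAP decoding of the $t$-th leaf requires $2^{R_t}$ inner products each of length $n/2^{r-1}$; since $R_t \le m-r+2$ by \Pref{prop_subcode} (the projected code is a subcode of $\mathcal{RM}(m-r+1,1)$, whose dimension is $m-r+2$), we have $2^{R_t}\le 2^{m-r+2}=4n/2^r$, giving $\mathcal{C}(m-r+1,1)=\mathcal{O}\big((n/2^{r-1})\cdot(n/2^{r-1})\big)=\mathcal{O}(n^2/2^{2r-3})$ after absorbing constants; the same count applies to soft-MAP since Algorithm \ref{alg_softmap} performs the matrix multiplication $\boldsymbol{l}_p\tilde{\boldsymbol{C}}^T$ of size $(n/2^{r-1})\times 2^{R_t}$ plus $\mathcal{O}(2^{R_t})$-size max/min reductions, all dominated by the same product. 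Substituting into the first bound gives $\mathcal{O}(n^{r-1}\cdot n^2/2^{2r-3}) = \mathcal{O}(n^{r+1})$ (the $2^{-2r+3}$ factor is a constant absorbed by $\mathcal{O}$, consistent with treating $r$ as fixed).

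For the pruned cases I would introduce the pruning factor $\beta$: retaining a $\beta$-fraction of the $\mathcal{O}(n)$ projections at each of the $r-1$ layers shrinks the number of leaves from $\mathcal{O}(n^{r-1})$ to $\mathcal{O}((\beta n)^{r-1})$, so with $\beta=\mathcal{O}(1/n)$ the leaf count is $\mathcal{O}(1)$ and the total cost collapses to a constant number of leaf decodings, i.e. $\mathcal{O}(\mathcal{C}(m-r+1,1))=\mathcal{O}(n^2)$ — here one must also verify the internal projection/aggregation work along the $\mathcal{O}(1)$ retained root-to-leaf paths is $\mathcal{O}(n\log n)$ per path and hence subdominant. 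Finally, if additionally $2^{R_t}=\mathcal{O}(1)$ for all $t$, then each leaf MAP/soft-MAP costs $\mathcal{O}(n/2^{r-1})=\mathcal{O}(n)$ rather than $\mathcal{O}(n^2)$, and the projection recursion \eqref{rec_proj} itself — which touches $\Theta(n)$ entries — becomes the bottleneck, so the overall complexity is $\mathcal{O}(n)$.

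The main obstacle I anticipate is not any single estimate but the bookkeeping that the per-node overhead (projection via \eqref{rec_proj}, soft-aggregation via \eqref{eq_softaggr}, and the Algorithm \ref{U_Cp_Alg} preprocessing, which is amortized/precomputed) never dominates the leaf MAP cost across the whole tree, and making precise the convention that $r$ is a fixed constant so that factors like $2^{-2r+3}$ and the number of outer convergence iterations can be swallowed into the $\mathcal{O}(\cdot)$ — stating this convention explicitly is what makes the clean exponents $n^{r+1}$, $n^2$, and $n$ come out.
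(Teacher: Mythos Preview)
Your proposal is correct and follows essentially the same approach as the paper: count the $\mathcal{O}(n^{r-1})$ (resp.\ $\mathcal{O}((\beta n)^{r-1})$) leaves in the recursion tree, bound each leaf's (soft-)MAP cost by $\mathcal{O}(n_1\cdot 2^{k_1})$ with $n_1=2^{m-r+1}$ and $k_1\le m-r+2$, and multiply. If anything, you are more careful than the paper's own appendix, which simply asserts ``proof by induction'' for the leaf count and does not explicitly verify that the per-node projection/aggregation work is subdominant or state the fixed-$r$ convention you flag; your attention to these bookkeeping points is appropriate and does not change the argument's substance.
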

\begin{proof}
	Please refer to Appendix \ref{app_comp}.
\end{proof}

}

\section{Encoding Insights and Ad-Hoc Projection Pruning}\label{sec_insights}
\subsection{Encoding Insights}\label{sec_encoding}
Although the main objective of this paper is to develop low-complexity schemes for decoding RM subcodes, meanwhile, in this subsection, we provide some insights on how the design of the encoder can affect the decoding complexity as well as the performance. 
Throughout the paper, we define the signal-to-noise ratio (SNR) as ${\rm SNR}:=1/(2\sigma^2)$ and the energy-per-bit $E_b$ to the noise ratio as $E_b/N_0:=n/(2k\sigma^2)$, where $\sigma^2$ is the noise variance. Additionally, the number of outer iterations for our recursive algorithms is set to $N_{\rm max}=3$ to ensure the convergence of the algorithms. 
In this section, we mainly present the results for relatively short RM subcodes in order to have the ability to obtain the MAP decoding performance for additional insights and comparison. In Section \ref{sec_nn}, we present the results for relatively larger RM subcodes.

First, in order to further highlight the efficiency of RM subcodes, in Fig. \ref{fig1}, we compare the block error rate (BLER) performance of RM subcodes with the performance of time-sharing (TS) between RM codes under the optimal MAP decoding.
We consider two RM subcodes with parameters $(n,k)=(64,14)$ and $(64,18)$. The generator matrix construction for these codes is based on having the largest ranks for the projected generator matrices (i.e., $\boldsymbol{G}_{\rm max}$) which will be clarified at the end of this subsection. The TS performance is obtained by assuming that the transmitter employs an $\mathcal{RM}(6,2)$ encoder in $\alpha$ fraction of time and an $\mathcal{RM}(6,1)$ encoder in the remaining $(1-\alpha)$ fraction. In this experiment, we set $\alpha=7/15$ and $11/15$ to achieve the same code rates of $14/64$ and $18/64$, respectively,  as the RM subcodes. It is observed that the RM subcodes with the rates $14/64$ and $18/64$ achieve more than $1$ \si{dB} and $0.4$ \si{dB} gains, respectively, compared to the TS counterparts. Also, the performance of the RM subcode with rate $18/64$ is almost $0.2$ \si{dB} better than the performance of the lower rate code with TS. Note that all the simulation results in this paper are obtained from more than $10^5$ trials of random codewords (except $\mathcal{RM}(6,2)$ under the MAP decoding that has $10^4$ trails). 
\begin{figure}[t]
	\centering
	\includegraphics[trim=0.5cm 0.2cm 0 0,width=3.8in]{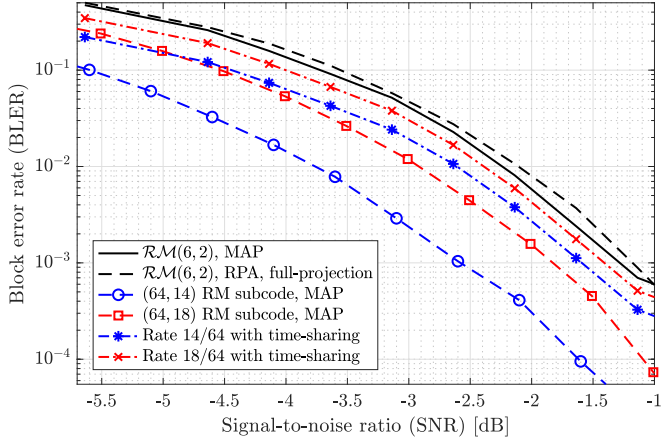}
	\caption{Simulation results for the BLER of various codes under the MAP decoding. The comparison with the time-sharing scheme between $\mathcal{RM}(6,1)$ and $\mathcal{RM}(6,2)$ to achieve the same rates $14/64$ and $18/64$ is also included.}
	\label{fig1}
\end{figure}

As discussed earlier, our decoding algorithms perform the MAP or soft-MAP decoding at the bottom layer. Also, the dimension of the projected codes at the bottom layer (i.e., the rank of the projected generator matrices) can be different. This is in contrast to the RM codes that always result in the same dimension for the projected codes at the bottom layer. Therefore, an immediate approach for encoding RM subcodes to achieve a lower decoding complexity is to  construct the code generator matrix such that the projected codes at the bottom layer have smaller dimensions, and thus the decodings at the bottom layer have lower complexities. In other words, let $L:=\sum_{t=1}^{T}2^{R_{t}}$ represent a rough evaluation of the decoding complexity at the bottom layer, i.e., the decoding complexity at the bottom layer is roughly a constant times $L$. Then, among all $\binom{k_u-k_l}{k-k_l}$ possible selections of the generator matrix $\boldsymbol{G}_{k\times n}$, we can choose the ones that achieve a smaller $L$. This encoding scheme leads to reduction in the decoding complexity of our algorithms but it can also affect the performance. 

\begin{figure}[t]
	\centering
	\includegraphics[trim=0.5cm 0.2cm 0 0,width=3.8in]{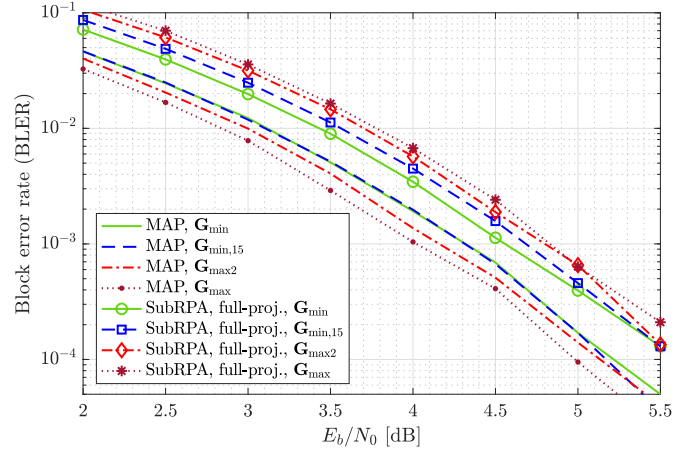}
	\caption{Simulation results for the $(64,14)$ RM subcodes under the MAP and subRPA decoding given four different selections of the generator matrix $\boldsymbol{G}_{k\times n}$.}
	\label{fig2}
\end{figure}
In order to investigate the effect of the aforementioned encoding methodology, in Fig. \ref{fig2}, we consider four different selections of the generator matrix for the $(64,14)$ RM subcode. In particular, $\boldsymbol{G}_{\rm max}$ and $\boldsymbol{G}_{{\rm max}2}$ have the first and second largest values of $L=2568$ and $2532$, respectively, among all possible selections, while $\boldsymbol{G}_{\rm min}$ having the minimum value of $L=1482$. Also, $\boldsymbol{G}_{{\rm min},15}$ has the minimum value of $\sum_{t}2^{R_{t}}=108$ on $15$ projections but a relatively large value of $L=2412$ on all $63$ projections. 
Fig. \ref{fig2} suggests a slightly better performance under the MAP decoder for larger values of $L$. However, surprisingly, our decoding algorithm exhibits a completely opposite behavior, i.e., a better performance is achieved for our subRPA algorithm with smaller values of $L$. This is then a two-fold gain: a better performance for an encoding scheme that results in a lower complexity for our decoding algorithm. We did extensive sets of experiments which all confirm this \textit{empirical} observation. However, still, further investigation is needed to precisely characterize the performance-complexity trade-off as a result of the encoding process.

\subsection{Ad-Hoc Projection Pruning}\label{sec_projprun}
One direction for reducing the complexity of our decoding algorithms is to prune the number of projections at each layer. Particularly, let us assume that, at each layer and node in the decoding tree, the complexity of decoding each branch (that corresponds to a given projection) is the same. This is not precisely true given that the projected codes at the bottom layer may have different dimensions. 
Now, assuming the complexity of the aggregations performed at each layer is the same, 
pruning the number of projections by a factor $\beta\in(0,1)$ is roughly equivalent to reducing the complexity by a factor of $\beta$ at each layer.
In other words, if we have a subcode of $\mathcal{RM}(m,r)$, then there are $r-1$ layers in the decoding tree and hence, the projection pruning exponentially reduces the decoding complexity by a factor of $\beta^{r-1}$. This is essential to make the decoding of higher order RM subcodes practical. One can also opt to choose a constant number of projections per layer (i.e., prune the number of projections at upper layers with smaller $\beta$'s) to avoid high-degree polynomial complexities.

Given that the projected codes at the bottom layer can have different dimensions (in contrast to RM codes), the projection subspaces should be carefully selected to reduce the complexity without having a notable effect on the decoding performance. Our empirical results show that the choice of the sets of projections can significantly affect the decoding performance of RM subcodes.
To see this, in Fig. \ref{fig3}, we consider the generator matrix $\boldsymbol{G}_{{\rm min},15}$ for encoding a $(64,14)$ RM subcode. In addition to full-projection decoding (i.e., $63$ 1-D subspaces), we also evaluate the performance of subRPA and soft-subRPA with $15$ projections picked according to three different projection pruning schemes. 

First, we consider a subset of $15$ subspaces that results in maximum ranks for the projected generator matrices at the bottom layer. In this setting, denoted by ``maxRank'' in Fig. \ref{fig3}, all the $15$ projections result in the same rank of $6$. It is observed that this selection of the projections significantly degrades the performance (almost $1$ \si{dB} gap with full-projection decoding). Our extensive simulation results with other generator matrices and code parameters also confirm the same observation that, although it requires a higher complexity for the MAP or soft-MAP decoding of the projected codes at the bottom layer, the maxRank selection fails to achieve a good performance compared to the other considered projection pruning schemes.

Next, we consider the other extreme of projection selection, i.e., we select $15$ subspaces that result in minimum ranks for the projected codewords. This proposed method for the selection of projections is referred to as  the ``minRank'' scheme in  this paper. In this case, three of the ranks are equal to $2$ and the remaining are equal to $3$. Therefore, the decoder in this case can perform the MAP and soft-MAP decodings at the bottom layer almost $9$ times faster than the maxRank selection (note that $L=108$ and $960$ for the minRank and maxRank selections, respectively). Surprisingly, despite its lower complexity compared to the maxRank selection, the minRank selection is capable of achieving very close to the performance of the full-projection decoding ($\approx 0.1$ \si{dB} gap in the case of both the subRPA and soft-subRPA decoding). Our additional simulation results -- some of which presented in Section \ref{sec_nn} -- \textit{mostly} confirm the same observation and suggest a promising performance for the minRank projection pruning scheme or schemes that result in relatively low $L$'s (if not the minimum $L$).


\begin{figure}[t]
	\centering
	\includegraphics[trim=0.5cm 0.2cm 0 0,width=3.8in]{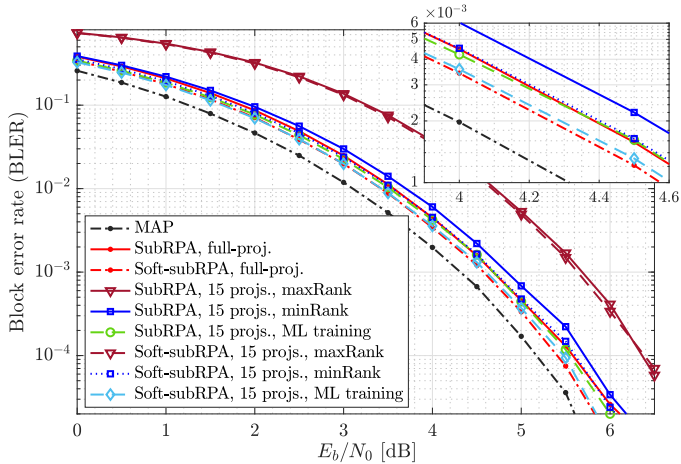}
	\caption{Performance of subRPA and soft-subRPA under full-projection decoding as well as different projection pruning schemes, i.e., picking according to the minimum ranks, maximum ranks, and training a machine learning model. The generator matrix $\boldsymbol{G}_{{\rm min},15}$ is considered for the encoding process of a $(64,14)$ RM subcode.}
	\label{fig3}
\end{figure}

Even though the minRank selection scheme is capable of achieving very close to the performance of full-projection decoding, one cannot guarantee that it is the best selection in terms of minimizing the decoding error rate. In practice, we may want to prune most of projections per layer to allow efficient decoding at higher rates (equivalently, higher order RM subcodes) with a manageable complexity. In such scenarios, we may, inevitably, have a meaningful gap with full-projection decoding, more than what we observed here for minRank selection (i.e., $\approx 0.1$ \si{dB}). Therefore, one needs to ensure that the sets of the selected projections are the ones that minimize the decoding error rate, i.e., the gap to the full-projection decoding. As we will show in Section \ref{sec_nn}, there are scenarios where the performance of the minRank selection significantly diverges from that of the full-projection decoding performance, and it may even perform worse than a random selection of the projections. The failure of the ad-hoc projection pruning schemes in guaranteeing a good performance is the major motivation behind  our ML-aided projection pruning scheme presented in the next section.

 In the next section, we shed light on how the proposed soft-subRPA algorithm enables training an ML model to search for the optimal set of projections. This will then establish the fact that the combination of our soft-subRPA decoding algorithm with our ML-aided projection pruning framework enables efficient decoding (in terms of both decoding error rate and complexity) of RM subcodes. To see the potentials of this scheme, in Fig. \ref{fig3}  the results of our decoding algorithms with $15$ projections picked by training our ML model are also included. It is observed that the trained model also has the tendency to pick projections that result in smaller ranks for the projected generator matrices, i.e., $3$ rank-$2$, $6$ rank-$3$, and $6$ rank-$4$ projections are picked by the ML model (resulting in $L=156$).
Fig. \ref{fig3} demonstrates identical performance to full-projection decoding, for both subRPA and soft-subRPA algorithms, which is the best one can hope for with the pruned-projection decoding. Additionally, it is observed that the soft-subRPA algorithm can  improve upon the performance of the subRPA algorithm by almost $0.1$ \si{dB}.

\begin{figure*}[t]
	\centering
	\includegraphics[trim=0cm 3cm 0 2cm,width=7in]{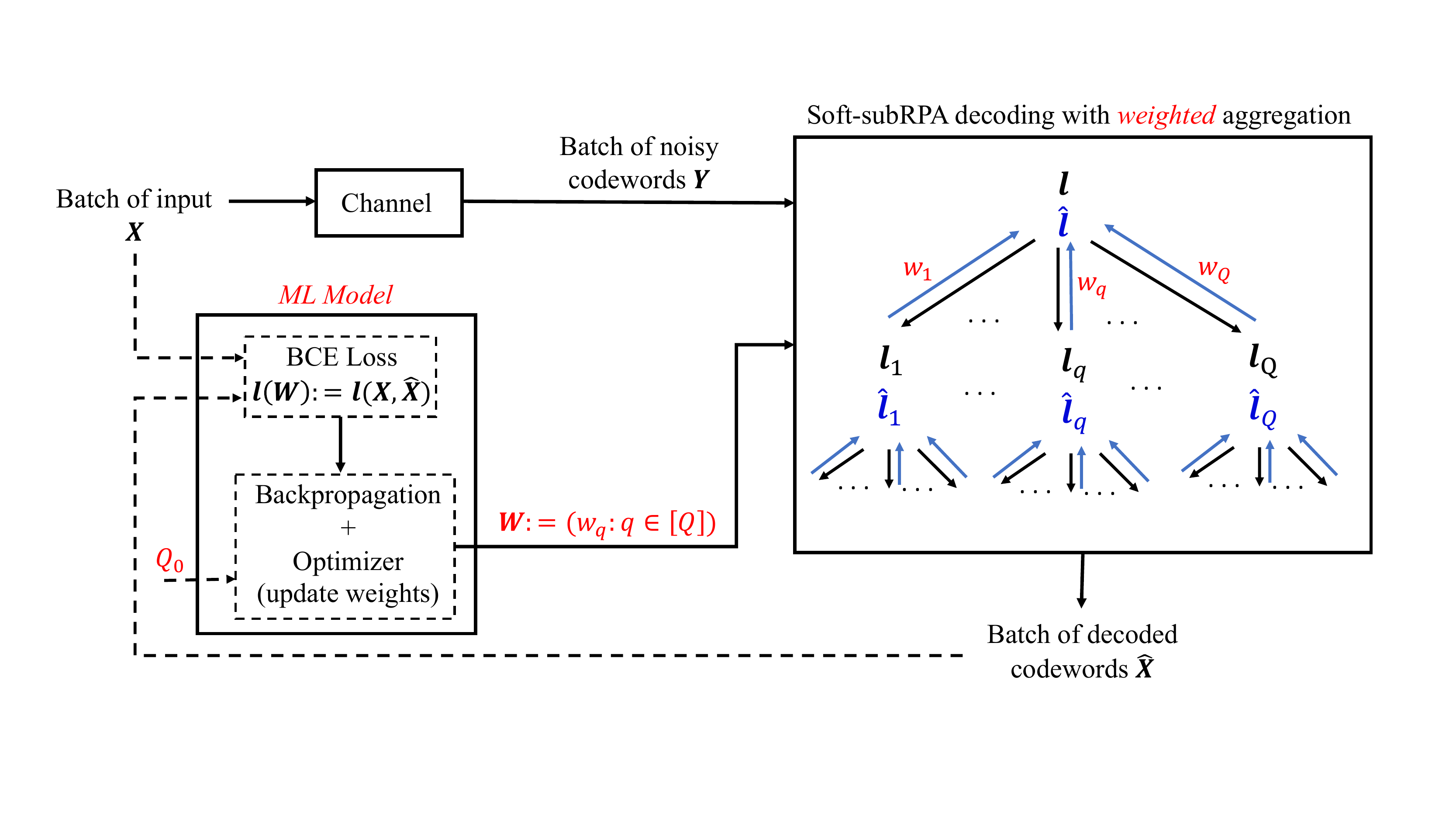}
	\caption{The training procedure of the proposed ML-aided projection pruning scheme for decoding RM subcodes.}
	\label{fig0_2}
\end{figure*}

\section{ML-Aided Projection Pruning}
\label{sec_nn}
As mentioned earlier, the goal is to train an ML model to find the best subset of projections. To do so, as schematically shown in Fig. \ref{fig0_1}, we assign a weight metric $w_q$ to each $q$-th projection such that $w_q\in[0,1]$ and $\sum_{q=1}^Qw_q=1$, where $Q$ is the number of full projections for a given (projected) code in the decoding process. The objective is then to train an ML model  to pick a subset of $Q_0$ projections (i.e., prune the number of projections by a factor $\beta=Q_0/Q$) that minimize the training loss. Building upon the success of stochastic gradient descent methods in training complex models, we want to use gradients for this search. In other words, the ML model updates the weight vector $\boldsymbol{w}:=(w_q, q\in[Q])$ such that picking the $Q_0$ projections corresponding to the largest weights results in the best performance.

There are two major challenges in training the aforementioned ML model. First, the MAP decoding that needs to be performed at the bottom layer (see \eqref{map}) is not differentiable since it involves the ${\rm argmax}(\cdot)$ operation which is not a continuous function. Therefore, one cannot apply the gradient-based training methods to our subRPA algorithm. However, the proposed soft-subRPA algorithm overcomes this issue by replacing the non-differentiable MAP decoder at the bottom layer with the differentiable soft-MAP decoder\footnote{Note that the soft-MAP algorithm involves $\max(\cdot)$ function which, unlike ${\rm argmax}(\cdot)$, is a continuous function. Also, the derivative of the function $\max(0,x)$ is defined everywhere except in $x=0$ which is a rare event to happen. Accordingly, advanced training tools, such as PyTorch library (that is used in this research), easily handle and treat $\max(\cdot)$ as a differentiable function. For example, the rectified linear unit function ${\rm ReLU}(x):=\max(0,x)$ is a widely used activation function in  neural networks.}. 
The second issue is that the combinatorial selection of $Q_0$ largest elements of the vector $\boldsymbol{w}$ is not differentiable. To address this issue, we apply the SOFT (Scalable Optimal transport-based diFferenTiable) top-$k$ operator,
proposed very recently in \cite{xie2020differentiable}, to obtain a smoothed approximation of the top-$k$ operator whose gradients can be efficiently approximated. It is worth mentioning that the SOFT top-$k$ function is a generalization of the soft-max function, which is a soft version of the  $\rm{argmax}$ function. In other words, the SOFT top-$k$ function can be viewed as a soft version of the top-$k$ function.

The training procedure is schematically shown in Fig. \ref{fig0_2}, and is briefly explained next. We use the PyTorch library of Python to first implement our soft-subRPA decoding algorithm in a fully differentiable way for the purpose of the gradient-based training.
We initialize the weight vector as $\boldsymbol{w}_0:=(1/Q,\cdots, 1/Q)$, i.e., equal weights for all the projections. For each training iteration, we randomly generate a batch of $B$ codewords of the RM subcode, and compute their corresponding LLR vectors given a carefully chosen training SNR. Then we input these LLR vectors to the soft-subRPA decoder to obtain the soft decisions at each layer. During the soft-aggregation step, instead of unweighted averaging of \eqref{eq_softaggr}, the weighted averages of the soft decisions at all $Q$ projections are computed as
\begin{align}\label{weighted_avrg}
\tilde{\boldsymbol{l}}(\boldsymbol{z})=\sum_{q=1}^{Q}w_q\tanh\big(\boldsymbol{\hat{l}}_q\left([\boldsymbol{z}+\mathbb{B}_q]\right)/2\big) \boldsymbol{l}(\boldsymbol{z}\oplus\boldsymbol{z}_q).
\end{align}

Ideally, the top-$k$ operator should return nonzero weights only for the top $Q_0$ elements. However, due to the smoothed SOFT top-$k$ operator, all $Q$ elements of $\boldsymbol{w}$ may get nonzero weights though the 
major accumulation of weights will be on the largest $Q_0$ elements once the training is completed.
Therefore, the above weighted average is approximately equal to the weighted average over the largest $Q_0$ weights (i.e., \eqref{weighted_avrg} represents a proper approximation of the aggregation in the case of the pruned-projection decoding).
Note that we apply the same procedure for all (projected) RM subcodes at each node and layer of the recursive decoding algorithm while we define different weight vectors (and possibly different $Q_0$'s) for each sets of projections corresponding to each (projected) codes. We also consider fixed weight vectors for decoding all $B$ codewords at each iteration.

Once the soft decoding of the codewords are obtained, the ML model  updates all weight vectors at each iteration to iteratively minimize the training loss. To do so, we apply the ``Adam'' optimization algorithm \cite{kingma2014adam} to minimize the training loss while using ``BCEWithLogitsLoss'' \cite{nn_loss2} as the loss function, which efficiently combines a sigmoid layer with the binary cross-entropy (BCE) loss. By computing the loss  between the true labels from the generated codewords and the predicted LLRs from the decoder output, the optimizer then moves one step forward by updating the model, i.e., the weight vectors. 

Finally, once the model converges after enough number of iterations, we save the weight vectors to perform optimal projection pruning. Note that in order to reduce the decoding complexity and the overload of training process, we only train the model for a given, properly chosen, training SNR. In other words, once the training is completed, we fix the decoder by picking only the subsets of projections according to the largest values of the weight vectors. We then test the performance of our algorithms given the fixed decoder (i.e., the fixed subsets of projections) for all codewords and across all SNR points. One can apply the same procedure to train the model for each SNR point, or even actively for each LLR vector, to possibly improve upon the performance of our \textit{fixed} projection pruning scheme at the expense of increased model complexity and training overload. 

{ The training SNR, which will be used to generate noisy codewords as training data, is an important hyper-parameter that needs to be carefully chosen to ensure a good performance.
In the context of training models for channel coding, it is conventional to consider a smaller training SNR for the decoder training schedule compared to the encoder training schedule, as the former is often a more challenging task than the latter. 
It is also possible to consider a range of training SNR to further help the single trained model to generalize and perform well across a wide range of SNR during the inference phase (see, e.g., \cite{jamali2023productae} for a thorough empirical investigation on how the training SNR affects the training performance of channel encoders and decoders). In this paper, 
we use a single SNR point (not a range) for training the model to prune the decoding projections. 
We use the result of the full-projection pruning as a benchmark to select the training SNR point (by considering the pruning effect). Specifically, if the full-projection pruning requires $\gamma$ dB to achieve the BLER of $10^{-3}$, we pick the training SNR as $\gamma+\epsilon$ dB, for some positive offset $\epsilon$ that needs to be adjusted according to the pruning factor (i.e., $\epsilon$ is larger if a larger fraction of projections are pruned). Note that this heuristic approach is to pick a starting training SNR, and the final training SNR may need to be adjusted by further hyper-parameter tuning.
}

\begin{figure}[t]
	\centering
	\includegraphics[trim=0.5cm 0.2cm 0 0,width=3.8in]{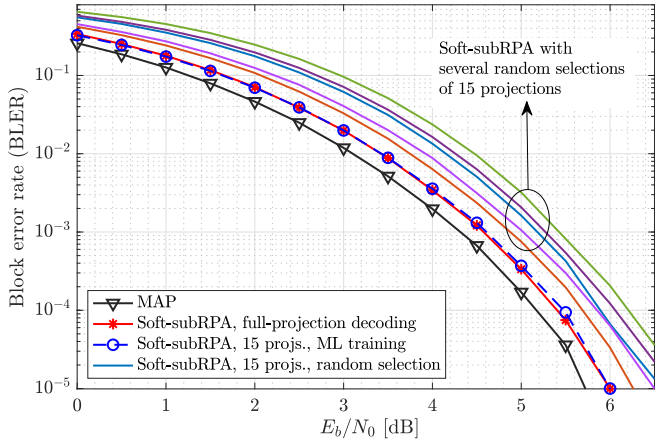}
	\caption{Performance comparison of the MAP decoder with full- and pruned-projection soft-subRPA decoding for a $(64,14)$ RM subcode encoded using the generator matrix $\boldsymbol{G}_{{\rm min},15}$. The performance of the ML-aided projection pruning is also compared to several random selections of projections.}
	\label{fig0}
\end{figure}
Fig. \ref{fig0} demonstrates the potentials of our ML-aided soft decoding algorithm, i.e., soft-subRPA with ML-aided projection pruning, in efficiently decoding RM subcodes. In this experiment, $\boldsymbol{G}_{{\rm min},15}$ 
is used to encode a $(64,14)$ RM subcode.\footnote{We should emphasize that the proposed decoding algorithms and the ML-aided projection pruning scheme are presented in general forms and are not restricted to low rates and lengths.
While decoding a higher-order RM subcode requires a higher complexity, the ML-aided pruning scheme reduces the complexity by a factor of $\beta^{r-1}$ ensuring the best decoding performance given a pruning factor. 
 In our numerical experiments, we focus on subcodes of order-$2$ RM codes that correspond to relatively small code dimensions (i.e., low rates).
  This should not be interpreted as a limitation of our schemes.}
It is observed that our ML-based projection pruning scheme, with only $15$ projections, is able to achieve an almost identical performance to that of the full-projection soft-subRPA decoding with $63$ projections. This is equivalent to reducing the complexity by a factor of more than $4$ without sacrificing the performance. Our low-complexity ML-based pruned-projection decoding has then only about $0.25$ \si{dB} gap with the performance of the MAP decoding. For comparison, the performance of the pruned-projection decoding under several random selections of $15$ projections is also provided. As seen, the choice of the projections can significantly impact the decoding performance of RM subcodes, and 
 randomly selecting the subsets of projections 
cannot guarantee a
 competitive performance.

\begin{figure}[t]
	\centering
	\includegraphics[trim=0.5cm 0.2cm 0 0,width=3.8in]{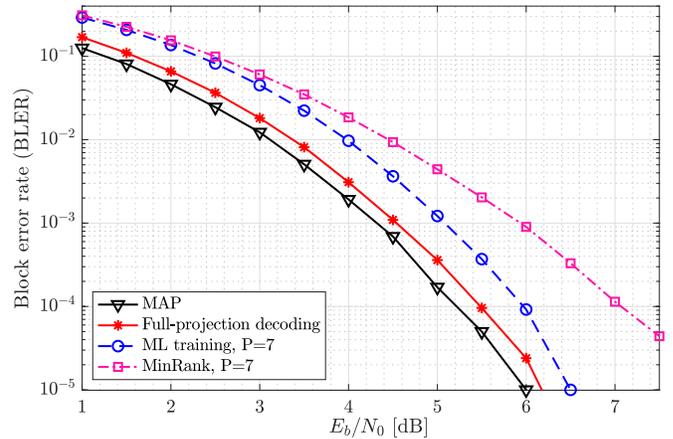}
	\caption{Performance comparison of the MAP decoder with the soft-subRPA decoding for a $(64,14)$ RM subcode encoded using the generator matrix $\boldsymbol{G}_{{\rm min}}$. Full-projection decoding and pruning with $P=7$ projections are considered.}
	\label{fig7}
\end{figure}
Fig. \ref{fig7} presents the performance of a $(64,14)$ RM subcode encoded using the generator matrix $\boldsymbol{G}_{{\rm min}}$. Pruned-projection soft-subRPA decoding with very small number of projections, i.e., $P=7$, is considered. The ML-aided projection-pruned decoding, with $9$ times smaller number of projections, is observed to have less than $0.4$ dB gap with the full-projection decoding. However, the minRank selection significantly degrades the performance, resulting in more than $1$ dB gap with the ML-aided pruning scheme at the BLER of $10^{-4}$. 
To train the ML model in Fig. \ref{fig7}, $Q_0$ was set to $5$ during the training phase but $P=7$ projections corresponding to the largest $7$ weights were selected for the testing. The rationale behind this selection was that nearly $20$\% of the weights were distributed outside the largest $5$ weights (due to the SOFT top-$k$ function), as the sorted  weight vector after training was $\boldsymbol{w}_{\rm sorted}=[0.2012, 0.1781, 0.1519, 0.1444, 0.1279, 0.1277, 0.0689, 0.0000,\\\cdots, 0.0000]$. 
Out of $63$ projected generator matrices of $\boldsymbol{G}_{{\rm min}}$, there are $1$ with rank $1$, $2$ with rank $2$, $28$ with rank $4$, and $32$ with rank $5$. Therefore, the projections picked by the minRank selection scheme result in the set of ranks $\{1,2,2,4,4,4,4\}$. The ML-based selection scheme, however, is observed to pick projections that result in the set $\{2,4,4,4,4,4,4\}$ of ranks, implying that some useful information may be lost if the decoder just picks the projections corresponding to minimum ranks (and thus some higher-rank projections are needed) when 
a significant fraction of projections are pruned{\footnote{We should emphasize that this does not mean that the ML-based selection scheme favors higher rank projections. Indeed, our extensive experiments suggest that the ML-based selection mostly favors smaller-rank projections. Specifically, it either results in the same set of ranks as the minRank selection or only substitutes some very low-rank projections with (slightly) higher-rank projections.}}.

\begin{figure}[t]
	\centering
	\includegraphics[trim=0.5cm 0.2cm 0 0,width=3.8in]{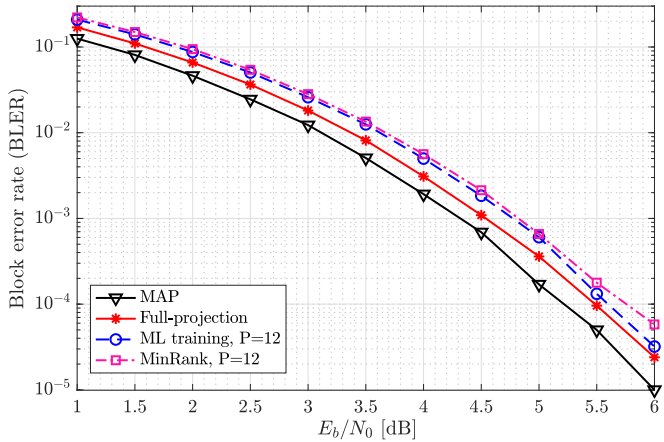}
	\caption{Performance of the full- and pruned-projection ($P=12$ projections) soft-subRPA decoding of a $(64,14)$ RM subcode encoded using $\boldsymbol{G}_{{\rm min}}$.}
	\label{fig8}
\end{figure}
Fig. \ref{fig8} shows the performance of a $(64,14)$ RM subcode, encoded using $\boldsymbol{G}_{{\rm min}}$, under the MAP and soft-subRPA decoding. The ML training was performed under $Q_0=7$ projections. However, since there were $12$ projections with much larger weights,  $P=12$ projections are considered for the testing plots of the pruned-projection decodings in Fig. \ref{fig8}. It is observed that both the minRank and ML-aided pruning schemes achieve very close to the performance of the full-projection decoding, with the ML-aided scheme slightly improving upon the minRank selection at higher SNRs (note that $5\times 10^5$ codewords were used to simulate the performance at each SNR point).
In terms of the rank statistics, it is observed that both selection schemes pick the projections that result in the minimum ranks, i.e., $1$ rank-$1$, $2$ rank-$2$, and $9$ rank-$4$ projections are picked by both schemes. However, the set of the selected projections are still different, as the two schemes only have $6$ projections in common, out of the total $12$ projections. In this case, we can think of the ML model \textit{breaking ties} among the projections that result in the same rank.

{Note that the parameter $Q_0$ is in general a hyper-parameter that needs to be tuned during the training. However, our experiments show that it is not very sensitive, i.e., a model trained for a given $Q_0$ may work well
	for different values of $P$ (i.e., the number of projections during testing/inference). In an ideal case, to use a fixed number $P$ of projections for pruned-projection decoding, one can set the parameter $Q_0=P$ for training. However, this choice may not be the best option. First, 
	due to the SOFT top-$k$ operator,
	we may not observe a sharp drop of trained weights after exactly $Q_0$ largest weights.
	Second, it is possible that some projections are equally good/bad and it is hard for the ML model to perfectly distinguish among them, so the ML model may end up assigning similar weights to such projections. Therefore, to use a fixed $P$, one can train ML models for some larger/smaller values of $Q_0$ than $P$, in addition to $Q_0=P$.
	However, our various training experiments (not presented here) suggest that this hyper-parameter tuning does not much affect the performance of the trained model.
	In the following figure, we use a single model trained for $Q_0=20$ for the selection of both $P=9$ and $P=44$ projections in an RM subcode of parameters $(256,30)$.
}

\begin{figure}[t]
	\centering
	\includegraphics[trim=0.5cm 0.2cm 0 0,width=3.8in]{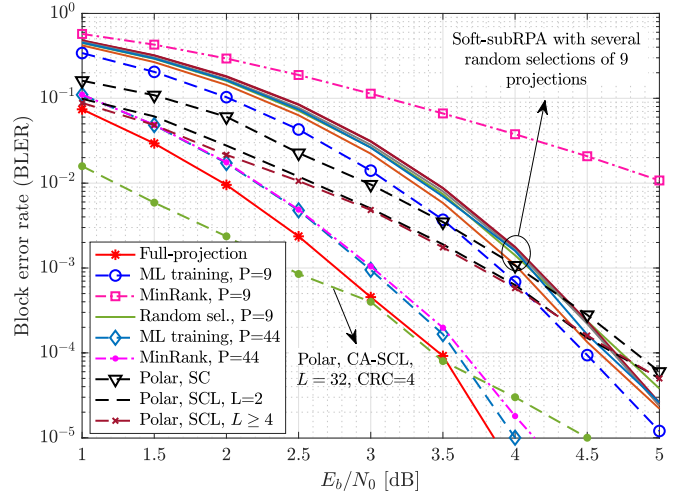}
	\caption{Performance of the full- and pruned-projection soft-subRPA decoding of a $(256,30)$ RM subcode generated through the $\boldsymbol{G}_{{\rm min}}$ encoding. $P=9$ and $44$ projections are considered for the pruned-projection decoding under the minRank, ML-aided, and random pruning schemes. {The plots for the performance of the polar $(256,30)$ code under successive cancellation (SC) decoding, SC-list (SCL) decoding, and cyclic redundancy check (CRC) aided SCL (CA-SCL) are also included.}}
	\label{fig9}
\end{figure}

Fig. \ref{fig9} presents the results for a medium-length RM subcode of parameters $n=256$ and $k=30$ constructed according to the $\boldsymbol{G}_{{\rm min}}$ encoding. To train the ML-aided projection-pruning model, $Q_0$ was set to $20$. However, two different values of $P=9$ and $44$ are used as the number of projections for testing the performance. These selections for $P$ were made by taking into account the profile of the weights after training (picking a $P$ if there is a sharp drop in the value of the next largest weight), and to study two extreme scenarios: 1) a relatively small number of projections such that there is a significant gap to the full-projection decoding; and 2) a relatively large $P$ where the performance of the ML-aided pruned-projection decoding is close to that of the full-projection decoding.

When $P=9$, where the projections are heavily pruned by a factor of more than $28$, the minRank training is observed to significantly diverge from the full-projection decoding performance (e.g., nearly $3$ and $4$ dB gaps at the BLERs of  $10^{-2}$ and $10^{-4}$, respectively). However, training the ML model is shown to enable achieving a significantly better performance. Moreover, the performance of several random selections of the projections are also tested, where, similar to Fig. \ref{fig0}, it is observed that 
the random projection pruning scheme fails to guarantee achieving the best performance for a given value of $P$. On the other hand, when $P=44$ projections are used, both the minRank and ML-aided projection pruning schemes are observed to achieve very close to the performance of the full-projection decoding, with the ML-aided scheme slightly outperforming the minRank scheme at the higher SNRs.


{
Fig. \ref{fig9} also compares the performance of the RM subcode with that of the polar $(256,30)$ code under successive cancellation (SC) decoding and SC-list (SCL) decoding. To construct the polar code, the Tal-Vardy code construction method is used to pick the $k$ bit-channels with the smallest BERs \cite{TV2}. The performance of the cyclic redundancy check (CRC) aided SCL (CA-SCL) decoding of the polar code is also included.
We note, however, that the comparison to the CA-SCL may not be fair as 
one can also do RM-CRC and consider RPA-type decoding algorithms together with Chase list decoding (see, e.g., \cite{ye2020recursive}).
Indeed, 
the comparison of plain codes with plain decoders is more meaningful, and polar with CRC is essentially a concatenated design.} The following are the main conclusions drawn from this figure.
\begin{itemize}
	\item First, the polar code under SC decoding fails to provide a comparable performance to that of the RM subcode, even under $P=7$ projections.
	\item The performance of the polar code under SCL quickly saturates with respect to the list size $L$ such that only a very minimal improvement is observed with increasing $L$, i.e.,  some gains from $L=1$ to $L=2$, very little gain from $L=2$ to $L=4$, and no gain from $L=4$ to larger $L$'s. This is while the RM subcode is able to achieve a much better performance by increasing $P$ from $9$ to $44$.
	\item The RM subcode under $P=44$ is able to achieve a significantly better performance than the polar code under SCL decoding with any list size. Even with $P=9$, the RM subcode beats the polar code under SCL for BLERs smaller than $\approx 7\times 10^{-4}$.
\end{itemize}
It is worth noting that, as seen in Fig. \ref{fig2}, the performance of an RM subcode, for a given $k$ and $n$, highly depends on the selection of the rows, i.e., the encoder design. Therefore, the objective of the paper is not to have a better performance than other classes of codes (which necessitates the best design of the RM subcode encoder) but to deliver the best performance given an RM subcode encoder (that, as shown above, has the potential to beat other classes of codes). We shall emphasize that the low latency of our decoding algorithms is another major advantage compared to polar codes as all decoding branches in the decoding tree (see Fig. \ref{fig0_1}) can be executed in parallel.

\begin{figure}[h]
	\centering
	\includegraphics[trim=0.5cm 0.2cm 0 0,width=3.8in]{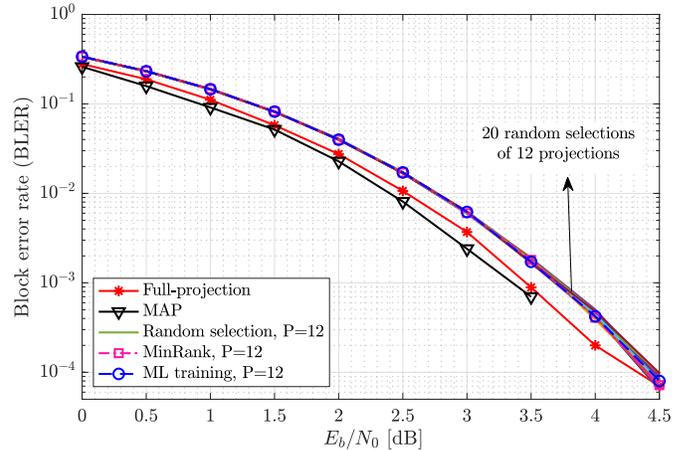}
		\caption{Performance of the full- and pruned-projection soft-subRPA (that reduces to the soft-version of RPA) decoding of an $\mathcal{RM}(6,2)$ code. $P=12$ projections are considered for the pruned-projection decoding under the minRank, ML-aided, and random pruning schemes.}
	\label{fig10}
	\vspace{-0.1in}
\end{figure}

{ In Fig. \ref{fig10}, the (soft-) subRPA decoding algorithm is applied to an $\mathcal{RM}(6,2)$ code (that has $k=22$ and $n=64$). As discussed in Remark 2, in this case, our decoding algorithm reduces to the original RPA decoding of RM codes \cite{ye2020recursive}. By evaluating the performance of many different random selections of $P=12$ projections, it is observed that the performance of a pruned-projection decoding of an RM code, for a given $P$, is (almost) the same regardless of the selection of the projections. This empirical observation then suggests that not much (if any) gain can be expected from ML training for projection selection in RPA decoding of RM codes. As such, our ML-based projection selection as well as the minRank scheme achieved the same performance as random selection of projections. This further suggests that the selection of projections is strongly tied to the rank profile/properties of the so-called projected generator matrices. We believe the theoretical study of this behavior, on both encoding and decoding of RM subcodes, is an interesting direction for future research.}


{ Finally, Fig. \ref{fig11} shows the error probability profile of encoded bits for the sample $(64,14)$ RM subcode with $P=12$ ML-aided projections that corresponds to the setting in Fig. \ref{fig8}. The $E_b/N_0$ is changed from $1$ dB to $4.5$ dB with the step size of $0.5$ dB. For each $E_b/N_0$ point, $10^5$ random codewords are examined and the mismatch of the decoder output with the encoded codeword is evaluated. It is observed that under all evaluated $E_b/N_0$'s, all encoded bits experience a relatively uniform/equal error probability.
}

\begin{figure}[t]
	\centering
	\includegraphics[trim=0.5cm 0.2cm 0 0,width=3.8in]{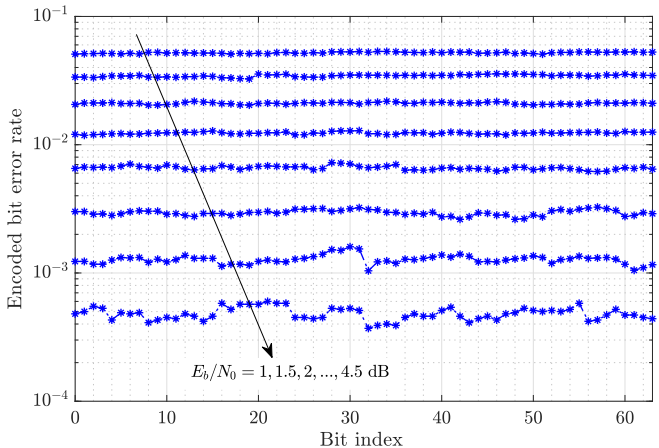}
	\caption{{Encoded bits error rate profile for the $(64,14)$ RM subcode under ML-aided projection selection.}}
	\label{fig11}
	\vspace{-0.1in}
\end{figure}

\section{Conclusions}\label{conc}
In this paper, we designed efficient decoding algorithms for decoding subcodes of RM codes. 
More specifically,
we first proposed a general recursive algorithm, namely the subRPA algorithm, for decoding RM subcodes. Then we derived a soft-decision based version of our algorithm, called the soft-subRPA algorithm, that not only improved upon the performance of the subRPA algorithm but also enabled a differentiable implementation of the decoding algorithm for the purpose of training a machine learning (ML) model. Accordingly, 
we proposed an efficient pruning scheme that finds the best subsets of projections via training an ML model.

Our simulation results on $(64,14)$ and $(256,30)$ RM subcodes demonstrate achieving very close the performance of the full-projection decoding using our ML-aided pruned-projection decoding algorithm with more than $4$ times smaller number of projections.
Our decoding algorithm also inherits the low-latency and parallelized implementation of the RPA algorithm;
when the training is completed, the set of projections are fixed, and all branches in the decoding tree can be executed in parallel.
We also provided some insights on encoding RM subcodes and studied several ad-hoc projection pruning schemes. Our extensive simulations showed that the random selection of projections cannot guarantee a competitive performance to that of the ML-aided pruning scheme, while the proposed minRank pruning scheme being \textit{often} a reasonable structured scheme, especially when the projections are not heavily pruned. On the other hand, when a significant fraction of projections are pruned, the minRank scheme was observed to significantly degrade the performance compared to the ML-aided pruning scheme.

The research in this paper can be extended in several directions such as training ML models to design efficient encoders for RM subcodes, and also leveraging higher dimension subspaces for projections to, possibly, further reduce the decoding complexity.

\appendices
\section{Proof of \Pref{prop_subcode}}\label{app_prop1}
The projection of $\mathcal{RM}(m,r)$ onto a $s$-dimensional subspace, $1\leq s\leq r$ is an $\mathcal{RM}(m-s,r-s)$ code \cite{ye2020recursive}. The code $\mathcal{C}$, that is a subcode of $\mathcal{RM}(m,r)$, is constructed by removing $k_u-k$ rows of the generator matrix of $\mathcal{RM}(m,r)$ that are not in the generator matrix of $\mathcal{RM}(m,r-1)$. We note that the projection of $\mathcal{RM}(m,r-1)$ onto a $s$-dimensional subspace, $1\leq s\leq r-1$, is an $\mathcal{RM}(m-s,r-1-s)$ code. Now, given that each $s$-dimensional projection is equivalent to partitioning $n$ columns of the generator matrix into $n/2^s$ groups of $2^s$ columns and adding them in the binary field (see Remark 1), the generator matrices of the projected codes contain rows of the generator matrix of $\mathcal{RM}(m-s,r-1-s)$ and, possibly, a subset of the rows of the generator matrix of $\mathcal{RM}(m-s,r-s)$ that are not in the generator matrix of $\mathcal{RM}(m-s,r-1-s)$. More precisely, if the selected additional $k-k_l$ rows do not contribute in the rank of the merged matrix according to a given subspace, the projected code onto that subspace is an $\mathcal{RM}(m-s,r-1-s)$ code. On the other hand, if the removed $k_u-k$ rows do not contribute in that rank, the projected code is an $\mathcal{RM}(m-s,r-s)$ code. Otherwise, that projected code is a subcode of $\mathcal{RM}(m-s,r-s)$. \endproof

{ 
	\section{Memory Requirements to Store Projected Matrices in (Soft-) MAP Algorithm}\label{app_memory}
	As discussed in Sections \ref{sec_subRPA} and \ref{sec_Soft-subRPA}, one can pre-compute and store the codebook of each projected code at the bottom layer to facilitate the (soft-) MAP decoding at that layer. In this appendix, we quantify the memory requirement for storing such matrices at the bottom layer, and discuss alternative approaches in applications with limited memory availability.
	
	Recall that for a subcode of $\mathcal{RM}(m,r)$, with $r>1$, the decoding involves $r-1$ layers of 1-D projections, resulting in $T=\prod_{i=1}^{r-1}(\frac{n}{2^{i-1}}-1)=\mathcal{O}(n^{r-1})$ projections for full-projection decoding.
	This number reduces to $T=\mathcal{O}(\beta^{r-1} n^{r-1})$ for a pruned-projection decoding with the pruning factor $\beta<1$.
	After $r-1$ layers of 1-D projections, we arrive at subcodes of $\mathcal{RM}(m-r+1,1)$ whose dimension is $R_t\leq m-r+2$, $\forall t\in[T]$. Therefore, the so-called projected codebooks $\mathcal{C}_p^{(t)}$ will contain $2^{R_{t}}$ (i.e., at most $2^{m-r+2}=n/2^{r-2}$) length-($n/2^{r-1}$) codewords, that can be stored in so-called projected codebook matrices $\boldsymbol{C}_p^{(t)}$ of size at most $(n/2^{r-2})\times(n/2^{r-1})$. Therefore, 
	$\mathcal{O}(\beta^{r-1}n^{r+1}/2^{2r-3})$ bits are required to store all $T$ projected codebooks. For example, for subcodes of $\mathcal{RM}(6,2)$ and $\mathcal{RM}(8,2)$ with $\beta=7/63$ and $9/255$ (that correspond to Figs. \ref{fig7} and \ref{fig9}, respectively), at most 14,563 and 296,068 bits (i.e.,  nearly $1.82$ kB and $37$ kB) respectively, are needed to store all codebooks at the bottom layer. 
	
	Similarly, $\mathcal{O}(k\beta^{r-1}n^{r}/2^{r-1})$ bits are needed to store all $T$ {projected generator matrices} $\boldsymbol{G}_p^{(t)}$ of dimension $k\times 2^{m-r+1}$. Finally, since each matrix $\boldsymbol{U}_p^{(t)}$ is of size $2^{R_t}\times k$,
	$\mathcal{O}(k\beta^{r-1}n^{r}/2^{r-2})$ bits are also needed to store all $T$ matrices $\boldsymbol{U}_p^{(t)}$. 
	Therefore, the memory $M_{\rm tot}$ (in terms of the number of bits) required to store all matrices $\boldsymbol{C}_p^{(t)}$, $\boldsymbol{U}_p^{(t)}$, and $\boldsymbol{G}_p^{(t)}$, $\forall t\in[T]$, at the bottom layer can be characterized as
	\begin{align}\label{memory}
	M_{\rm tot}&=\mathcal{O}(\beta^{r-1}n^{r+1}/2^{2r-3})+\mathcal{O}(k\beta^{r-1}n^{r}/2^{r-1})\nonumber\\
	&{\hspace{0.39cm}}+\mathcal{O}(k\beta^{r-1}n^{r}/2^{r-2})\nonumber\\
	&=\mathcal{O}\left(\beta^{r-1}n^{r}\left[3k+n/2^{r-2}\right]/2^{r-1}\right)
	\nonumber\\
	&=\mathcal{O}\left((n\beta/2)^{r}\left[k+n/2^{r-2}\right]\right).
	\end{align}
	Note that the pruning factor $\beta$ can be essentially $\mathcal{O}(1/n)$ so that the number of projections in each layer, i.e., $\mathcal{O}(\beta n)$,  becomes a constant. Then $(\beta n)^r=\mathcal{O}(1)$ (though with a large constant) and the overall memory requirement will scale linearly with $n$.
	
	Given the above analysis, in applications where this memory requirement may be hard to satisfy, one can directly apply Algorithm \ref{U_Cp_Alg} to compute these matrices during the decoding. We would like to emphasize that the use of the soft-MAP decoding at the bottom layer is motivated by the fact that all projected codewords are subcodes of order-1 RM codes whose dimensions are $R_t\leq m-r+2$. Given that our experiments suggest that projections with smaller $R_t$ are favorable in the decoding process, the above matrices are often significantly smaller than the bounds analyzed here, and the soft-MAP algorithm can be easily afforded. Nevertheless, one may extend the lower-complexity fast Hadamard transform (FHT) decoder of order-1 RM codes 
	to subcodes of order-1 RM codes, and then apply the extended FHT algorithm (instead of MAP) in the subRPA or its soft version (instead of soft-MAP) in the soft-subRPA algorithm or for training the ML model.}

\section{LLRs of the Information Bits}\label{appnd_LLR_inf}
Consider an AWGN channel model as $\boldsymbol{y}=\boldsymbol{s}+\boldsymbol{n}$, where $\boldsymbol{s}=1-2{\boldsymbol{c}}$, $\boldsymbol{c}\in \mathcal{C}$, and $\boldsymbol{n}$ is the AWGN vector with mean zero and variance $\sigma^2$ elements. Then, the LLR of the $i$-th information bit $u_i$ can be obtained using the max-log approximation as
\begin{align}\label{llr_inf}
\boldsymbol{l}_{\rm inf}(i) \approx \operatorname*{max}_{\boldsymbol{c}\in\mathcal{C}_i^0}~\langle \boldsymbol{l}, 1-2{\boldsymbol{c}}\rangle~-~ \operatorname*{max}_{\boldsymbol{c}\in\mathcal{C}_i^1}~\langle \boldsymbol{l}, 1-2{\boldsymbol{c}}\rangle,
\end{align}
where $\boldsymbol{l}:=2\boldsymbol{y}/\sigma^2$ is the LLR vector of the AWGN channel, and $\mathcal{C}_i^0$ and $\mathcal{C}_i^1$ are the subsets of the codewords whose $i$-th information bit $u_i$ is equal to zero or one, respectively.
To see this, observe that
\begin{align}\label{llrinf1}
\boldsymbol{l}_{\rm inf}(i) :=& \ln\left(\frac{\Pr(u_i=0|\boldsymbol{y})}{\Pr(u_i=1|\boldsymbol{y})}\right)\nonumber\\
\stackrel{(a)}{=}&\ln\left(\frac{\sum_{\boldsymbol{s}\in\mathcal{C}_i^0}\exp\left(-||\boldsymbol{y}-\boldsymbol{s}||_2^2/\sigma^2\right)}{\sum_{\boldsymbol{s}\in\mathcal{C}_i^1}\exp\left(-||\boldsymbol{y}-\boldsymbol{s}||_2^2/\sigma^2\right)}\right)\nonumber\\
\stackrel{(b)}{\approx}&\frac{1}{\sigma^2}\operatorname*{min}_{\boldsymbol{c}\in\mathcal{C}_i^1}||\boldsymbol{y}-\boldsymbol{s}||_2^2-\frac{1}{\sigma^2} \operatorname*{min}_{\boldsymbol{c}\in\mathcal{C}_i^0}||\boldsymbol{y}-\boldsymbol{s}||_2^2,
\end{align}
where step $(a)$ is by applying the Bayes' rule, the assumption $\Pr(u_i=0)=\Pr(u_i=1)$, the law of total probability, and the distribution of Gaussian noise. Moreover, step $(b)$ is by the max-log approximation. Finally, given that all $\boldsymbol{s}$'s have the same norm, we obtain \eqref{llr_inf}.

{
\section{Proof of \Pref{prop_comp}}\label{app_comp}
It is well known that the decoding complexity of the full-projection RPA-like decoding of an $\mathcal{RM}(m,r)$ code is $\mathcal{O}(n^r \log n)$ \cite{ye2020recursive}. Similarly, a proof by induction can show that the decoding complexity of our algorithms for a subcode of an $\mathcal{RM}(m,r)$ code, $r>1$,  is $\mathcal{O}(n^{r-1} \mathcal{C}(m-r+1,1))$, where $\mathcal{C}(m',1)$ stands for the complexity of decoding a subcode of an $\mathcal{RM}(m',1)$ code. We note that (proof by induction) the above complexity reduces to $\mathcal{O}(({\beta n})^{r-1} \mathcal{C}(m-r+1,1))$ for pruned-projection decoding with a pruning factor $\beta<1$. 

We first note that, assuming (soft-) MAP at the bottom layer, $\mathcal{C}(m-r+1,1)$ can be characterized as $\mathcal{O}(n_12^{k_1})$, where $n_1=2^{m-r+1}$ is the code length and $k_1=m-r+2$ is the code dimension in the bottom layer. Therefore, $\mathcal{C}(m-r+1,1)=\mathcal{O}(2^{m-r+1}2^{m-r+2})=\mathcal{O}(n^2/2^{2r-3})$. This complete the proof of the first part, i.e., the $\mathcal{O}(n^{r+1})$ complexity for full-projection decoding.

Next, as discussed in Appendix \ref{app_memory}, the pruning factor $\beta$ can be essentially $\mathcal{O}(1/n)$ so that the number of projections in each layer, i.e., $\mathcal{O}(\beta n)$,  becomes a constant. Then, $(\beta n)^{r-1}=\mathcal{O}(1)$ (though with a large constant) and the overall complexity reduces to $\mathcal{O}(\mathcal{C}(m-r+1,1))$. This then complete the proof of the second part, i.e., $\mathcal{O}(n^{2})$ complexity for pruned-projection decoding with pruning factor $\beta=\mathcal{O}(1/n)$.

Finally, as empirically observed in Section \ref{sec_nn}, in most cases the selected projections by our ML training scheme have very small (nearly the smallest) ranks $R_t$ for the projected generator matrices. Therefore, the number of codewords $2^{k_1}=2^{R_t}$ may be upper bounded by a constant. This then reduces the complexity to $\mathcal{O}(n)$ if $2^{R_t}=\mathcal{O}(1)$, $\forall t\in[T]$, in addition to $\beta=\mathcal{O}(1/n)$.

We would like to emphasize that the complexity analysis above my require some large constants (modeled by $\mathcal{O}(1)$). Therefore, even if the complexity can linearly scale with $n$, the involved constants may be large. However, a major advantage of our decoding algorithms is the reduction in the latency (e.g., compared to polar codes) as all the branches involved in the decoding tree (see, e.g., Fig. \ref{fig0_1}) can be executed in parallel. We refer the readers to \cite{ye2020recursive} for additional discussions on latency aspects of RPA-like decoding of RM codes.

}

\end{document}